\tikzset{boper/.style={rectangle,fill,inner sep=2pt,black}}
\tikzset{bblob/.style={circle,fill,inner sep=1.5pt,black}}
\tikzset{blob/.style={circle,draw,fill=white,outer sep=0mm,inner sep=0.3mm}}
\tikzset{mblob/.style={circle,fill=white,draw=black,inner sep=1.5pt}}
\tikzset{diam/.style={diamond,fill,inner sep=1.5pt,black}}
\tikzset{pblob/.style={circle,fill=white,draw=red,inner sep=1.5pt}}
\tikzset{
  aline/.style={
    decorate,
    decoration={
      meta-amplitude=#1,
      meta-segment length=0.15cm,
},
    postaction={decorate,ultra thick,decoration={markings,mark = at position #1 with {\arrow{>}}}}        
  },
  aline/.default=0.9
}
\tikzset{
  bline/.style={blue,
    decorate,
    decoration={
      meta-amplitude=#1,
      meta-segment length=0.3cm,
},
    postaction={decorate,ultra thick,decoration={markings,mark = at position #1 with {\arrow{>}}}}        
  },
  bline/.default=0.9
}
\tikzset{edge/.style={very thick,draw=blue}}%
\tikzset{contour/.style={brown,dashed,postaction={decorate,decoration={markings,mark = at position #1 with {\arrow{>}}}}}}
\tikzset{contour/.default=0.5}
\def\loos{0.35}
\def\slt{0.2}
\pgfmathsetmacro{\ae}{atan(\slt)}
\pgfmathsetmacro{\aw}{\ae+180}
\pgfmathsetmacro{\an}{90-\ae}
\pgfmathsetmacro{\as}{\an+180}
\pgfmathsetmacro{\sltb}{sqrt(1-\slt*\slt)}
\pgfmathsetmacro{\lcrot}{45-atan(\slt/\sltb)*0.5}
\tikzset{distort/.style={cm={1,0,-\slt,\sltb,(0,0)}}}
\def\goI#1(#2,#3){
\pgfextra{
\pgfmathparse{#1+180}\global\let\oldangle=\currentangle\global\let\newangle=\pgfmathresult\global\let\currentangle=#1
\pgfmathparse{\oldx+#2}\global\let\newx=\pgfmathresult\xdef\oldx{#2}
\pgfmathparse{\oldy+#3}\global\let\newy=\pgfmathresult\xdef\oldy{#3}
}
.. controls ++(\oldangle:\loos) and ++(\newangle:\loos) .. ++(\newx,\newy)
}
\def\go#1{\expandafter\goI#1}
\def\startI#1(#2,#3){
\pgfextra{\global\let\currentangle=#1\xdef\oldx{#2}\xdef\oldy{#3}}}
\def\start#1{\expandafter\startI#1}
\def\north{\an(0,0.5)}
\def\south{\as(0,-0.5)}
\def\east{\ae(0.5,0)}
\def\west{\aw(-0.5,0)}
\tikzset{bgplaq/.style={fill=lightgray!20!white}}
\tikzset{dgreen/.style={green!50!black,thick}}
\def\plaq(#1,#2){
\begin{scope}[shift={(#1,#2)}]
\draw[dotted] (-0.5,-0.5) rectangle ++(1,1); 
\end{scope}
}
\def\plaqz(#1,#2){
\begin{scope}[shift={(#1,#2)}]
\draw[bgplaq,dotted] (-0.5,-0.5) rectangle ++(1,1); 
\end{scope}
}
\def\plaqa(#1,#2){
\begin{scope}[shift={(#1,#2)}]
\draw[dotted,bgplaq] (-0.5,-0.5) rectangle ++(1,1);
\draw[edge] (0,-0.5) \start\north\go\east;
\draw[edge] (0,0.5) \start\south\go\west;
\end{scope}
}
\def\plaqb(#1,#2){
\begin{scope}[shift={(#1,#2)}]
\draw[dotted,bgplaq] (-0.5,-0.5) rectangle ++(1,1);
\draw[edge] (0,0.5) \start\south\go\east;
\draw[edge] (0,-0.5) \start\north\go\west;
\end{scope}
}
\def\bplaqe(#1,#2){
\begin{scope}[shift={(#1,#2)}]
\draw[dotted] (-0.5,0.5) -- (0.5,0.5) -- (-0.5,-0.5) -- cycle;
\end{scope}
}
\def\bplaqw(#1,#2){
\begin{scope}[shift={(#1,#2)}]
\draw[dotted] (0.5,-0.5) -- (0.5,0.5) -- (-0.5,-0.5) -- cycle;
\end{scope}
}
\def\bplaqn(#1,#2){
\begin{scope}[shift={(#1,#2)}]
\draw[dotted] (0.5,-0.5) -- (-0.5,0.5) -- (-0.5,-0.5) -- cycle;
\end{scope}
}
\def\bplaqs(#1,#2){
\begin{scope}[shift={(#1,#2)}]
\draw[dotted] (0.5,-0.5) -- (-0.5,0.5) -- (0.5,0.5) -- cycle;
\end{scope}
}
\def\bplaqez(#1,#2){
\begin{scope}[shift={(#1,#2)}]
\draw[dotted,bgplaq] (-0.5,0.5) -- (0.5,0.5) -- (-0.5,-0.5) -- cycle;
\end{scope}
}
\def\bplaqwz(#1,#2){
\begin{scope}[shift={(#1,#2)}]
\draw[dotted,bgplaq] (0.5,-0.5) -- (0.5,0.5) -- (-0.5,-0.5) -- cycle;
\end{scope}
}
\def\bplaqnz(#1,#2){
\begin{scope}[shift={(#1,#2)}]
\draw[dotted,bgplaq] (0.5,-0.5) -- (-0.5,0.5) -- (-0.5,-0.5) -- cycle;
\end{scope}
}
\def\bplaqsz(#1,#2){
\begin{scope}[shift={(#1,#2)}]
\draw[dotted,bgplaq] (0.5,-0.5) -- (-0.5,0.5) -- (0.5,0.5) -- cycle;
\end{scope}
}
\def\bplaqea(#1,#2){
\begin{scope}[shift={(#1,#2)}]
\draw[dotted,bgplaq] (-0.5,0.5) -- (0.5,0.5) -- (-0.5,-0.5) -- cycle;
\draw[edge] (0,0.5) \start\south\go\west;
\end{scope}
}
\def\bplaqeb(#1,#2){
\begin{scope}[shift={(#1,#2)}]
\draw[dotted,bgplaq] (-0.5,0.5) -- (0.5,0.5) -- (-0.5,-0.5) -- cycle;
\draw[edge] (0,0.5) \start\south\go\west
node[blob] {};
\end{scope}
}
\def\bplaqwa(#1,#2){
\begin{scope}[shift={(#1,#2)}]
\draw[dotted,bgplaq] (0.5,-0.5) -- (0.5,0.5) -- (-0.5,-0.5) -- cycle;
\draw[edge] (0,-0.5) \start\north\go\east;
\end{scope}
}
\def\bplaqwb(#1,#2){
\begin{scope}[shift={(#1,#2)}]
\draw[dotted,bgplaq] (0.5,-0.5) -- (0.5,0.5) -- (-0.5,-0.5) -- cycle;
\draw[edge] (0,-0.5) \start\north\go\east
node[blob] {};
\end{scope}
}
\def\bplaqna(#1,#2){
\begin{scope}[shift={(#1,#2)}]
\draw[dotted,bgplaq] (0.5,-0.5) -- (-0.5,0.5) -- (-0.5,-0.5) -- cycle;
\draw[edge] (0,-0.5) \start\north\go\west;
\end{scope}
}
\def\bplaqnb(#1,#2){
\begin{scope}[shift={(#1,#2)}]
\draw[dotted,bgplaq] (0.5,-0.5) -- (-0.5,0.5) -- (-0.5,-0.5) -- cycle;
\draw[edge] (0,-0.5) \start\north\go\west 
node[blob] {};
\end{scope}
}
\def\bplaqsa(#1,#2){
\begin{scope}[shift={(#1,#2)}]
\draw[dotted,bgplaq] (0.5,-0.5) -- (-0.5,0.5) -- (0.5,0.5) -- cycle;
\draw[edge] (0,0.5) \start\south\go\east;
\end{scope}
}
\def\bplaqsb(#1,#2){
\begin{scope}[shift={(#1,#2)}]
\draw[dotted,bgplaq] (0.5,-0.5) -- (-0.5,0.5) -- (0.5,0.5) -- cycle;
\draw[edge] (0,0.5) \start\south\go\east
node[blob] {}; 
\end{scope}
}
\def\plaqff(#1,#2){
\begin{scope}[shift={(#1,#2)}]
\draw[dotted,bgplaq] (-0.5,-0.5) rectangle ++(1,1);
\draw[edge] (0,-0.5) \start\north\go\east;
\end{scope}
}
\def\plaqf(#1,#2){
\begin{scope}[shift={(#1,#2)}]
\draw[dotted,bgplaq] (-0.5,-0.5) rectangle ++(1,1);
\draw[edge] (0,0.5) \start\south\go\west;
\end{scope}
}
\def\plaqd(#1,#2){
\begin{scope}[shift={(#1,#2)}]
\draw[dotted,bgplaq] (-0.5,-0.5) rectangle ++(1,1);
\draw[edge] (0,-0.5) \start\north\go\west;
\end{scope}
}
\def\plaqdd(#1,#2){
\begin{scope}[shift={(#1,#2)}]
\draw[dotted,bgplaq] (-0.5,-0.5) rectangle ++(1,1);
\draw[edge] (0,0.5) \start\south\go\east;
\end{scope}
}
\def\plaqc(#1,#2){
\begin{scope}[shift={(#1,#2)}]
\draw[dotted,bgplaq] (-0.5,-0.5) rectangle ++(1,1);
\draw[edge] (0,-0.5) \start\north\go\north;
\end{scope}
}
\def\plaqcc(#1,#2){
\begin{scope}[shift={(#1,#2)}]
\draw[dotted,bgplaq] (-0.5,-0.5) rectangle ++(1,1);
\draw[edge] (-0.5,0) \start\east\go\east;
\end{scope}
}
\def\plaqg(#1,#2){
\begin{scope}[shift={(#1,#2)}]
\draw[dotted,bgplaq] (-0.5,-0.5) rectangle ++(1,1);
\end{scope}
}
\def\cross(#1,#2){
\begin{scope}[shift={(#1,#2)}]
\draw[thick,blue] (-0.1,-0.1) -- (0.1,0.1); 
\draw[thick,blue] (-0.1,0.1) -- (0.1,-0.1); 
\end{scope}
}
\def\scross(#1,#2){
\begin{scope}[shift={(#1,#2)}]
\draw[thick,blue] (-0.07,-0.07) -- (0.07,0.07); 
\draw[thick,blue] (-0.07,0.07) -- (0.07,-0.07); 
\end{scope}
}
\def\triplaq(#1,#2,#3,#4,#5){
\begin{scope}[shift={(#1,#2)}]
\draw[dgreen] (1.6,-1) -- (0,0) -- (1.6,1);
\draw(0,0) node[bblob] {};
\draw(1.6,1) node[bblob] {};
\draw(1.6,-1) node[bblob] {};
\draw(0,0) node[left] {${#3}$};
\draw(1.6,1) node[right] {${#4}$};
\draw(1.6,-1) node[right] {${#5}$};
\draw[aline=0.9] (0.8,0.5) -- ( (0.8,-0.5);
\cross(0.8,0);
\draw[blue,wavy] (1.6,0) -- (0.8,0);
\end{scope}
}
\def\recpplaq(#1,#2,#3,#4,#5,#6){
\begin{scope}[shift={(#1,#2)}]
\draw[dgreen] (-0.5,-0.5) -- (0.5,-0.5);
\draw[dgreen] (-0.5,0.5) -- (0.5,0.5);
\draw (-0.5,-0.5) node[left] {${#6}$};
\draw (0.5,-0.5)  node[right] {${#5}$};
\draw (-0.5,0.5) node[left] {${#3}$};
\draw  (0.5,0.5) node[right] {${#4}$};
\draw (-0.5,-0.5) node[bblob] {};
\draw (0.5,-0.5)  node[bblob] {};
\draw (-0.5,0.5) node[bblob] {};
\draw  (0.5,0.5) node[bblob] {};
\draw[aline=0.9] (0,0.5) -- ( (0,-0.5);
\draw[blue,wavy=0.2] (-0.5,0) -- (0.5,0);
\end{scope}
}
\def\recmplaq(#1,#2,#3,#4,#5,#6){
\begin{scope}[shift={(#1,#2)}]
\draw[dgreen] (-0.5,-0.5) -- (0.5,-0.5);
\draw[dgreen] (-0.5,0.5) -- (0.5,0.5);
\draw (-0.5,-0.5) node[left] {${#6}$};
\draw (0.5,-0.5)  node[right] {${#5}$};
\draw (-0.5,0.5) node[left] {${#3}$};
\draw  (0.5,0.5) node[right] {${#4}$};
\draw (-0.5,-0.5) node[bblob] {};
\draw (0.5,-0.5)  node[bblob] {};
\draw (-0.5,0.5) node[bblob] {};
\draw  (0.5,0.5) node[bblob] {};
\draw[aline=0.9] (0,0.5) -- ( (0,-0.5);
\draw[blue,wavy=0.2] (0.5,0) -- (-0.5,0);
\end{scope}
}
\def\recpdashplaq(#1,#2,#3,#4,#5,#6){
\begin{scope}[shift={(#1,#2)}]
\draw[dgreen,dashed] (-0.5,-0.5) -- (0.5,-0.5);
\draw[dgreen] (-0.5,0.5) -- (0.5,0.5);
\draw (-0.5,-0.5) node[left] {${#6}$};
\draw (0.5,-0.5)  node[right] {${#5}$};
\draw (-0.5,0.5) node[left] {${#3}$};
\draw  (0.5,0.5) node[right] {${#4}$};
\draw (-0.5,-0.5) node[bblob] {};
\draw (0.5,-0.5)  node[bblob] {};
\draw (-0.5,0.5) node[bblob] {};
\draw  (0.5,0.5) node[bblob] {};
\draw[aline=0.9] (0,0.5) -- ( (0,-0.5);
\draw[blue,wavy=0.2] (-0.5,0) -- (0.5,0);
\end{scope}
}
\def\recmdashplaq(#1,#2,#3,#4,#5,#6){
\begin{scope}[shift={(#1,#2)}]
\draw[dgreen,dashed] (-0.5,-0.5) -- (0.5,-0.5);
\draw[dgreen] (-0.5,0.5) -- (0.5,0.5);
\draw (-0.5,-0.5) node[left] {${#6}$};
\draw (0.5,-0.5)  node[right] {${#5}$};
\draw (-0.5,0.5) node[left] {${#3}$};
\draw  (0.5,0.5) node[right] {${#4}$};
\draw (-0.5,-0.5) node[bblob] {};
\draw (0.5,-0.5)  node[bblob] {};
\draw (-0.5,0.5) node[bblob] {};
\draw  (0.5,0.5) node[bblob] {};
\draw[aline=0.9] (0,0.5) -- ( (0,-0.5);
\draw[blue,wavy=0.2] (0.5,0) -- (-0.5,0);
\end{scope}
}
\def\sosplaq(#1,#2){
\begin{scope}[shift={(#1,#2)}]
\draw[dgreen] (-0.5,-0.5) rectangle ++(1,1);
\draw(0.5,0.5) node[bblob] {};
\draw(0.5,-0.5) node[bblob] {};
\draw(-0.5,0.5) node[bblob] {};
\draw(-0.5,-0.5) node[bblob] {};
\draw[dgreen] (-0.5,0.2) -- (-0.3,0.5);
\end{scope}
}
\def\udashsosplaq(#1,#2){
\begin{scope}[shift={(#1,#2)}]
\draw[dgreen,dashed] (-0.5,0.5) -- (0.5,0.5) -- (0.5,-0.5);
\draw[dgreen] (-0.5,0.5)-- (-0.5,-0.5) -- (0.5,-0.5);
\draw(0.5,0.5) node[bblob] {};
\draw(0.5,-0.5) node[bblob] {};
\draw(-0.5,0.5) node[bblob] {};
\draw(-0.5,-0.5) node[bblob] {};
\draw[dgreen] (-0.5,0.2) -- (-0.3,0.5);
\end{scope}
}
\def\lsosplaq(#1,#2){
\begin{scope}[shift={(#1,#2)}]
\draw[dgreen] (-0.5,-0.5) rectangle (0.5,0.2) ++(1,1);
\draw(0.5,0.2) node[bblob] {};
\draw(0.5,-0.5) node[bblob] {};
\draw(-0.5,0.2) node[bblob] {};
\draw(-0.5,-0.5) node[bblob] {};
\draw[dgreen] (-0.5,-0.1) -- (-0.3,0.2);
\end{scope}
}
\def\usosplaq(#1,#2){
\begin{scope}[shift={(#1,#2)}]
\draw[dgreen] (-0.5,-0.2) rectangle (0.5,0.5) ++(1,1);
\draw(0.5,0.5) node[bblob] {};
\draw(0.5,-0.2) node[bblob] {};
\draw(-0.5,0.5) node[bblob] {};
\draw(-0.5,-0.2) node[bblob] {};
\draw[dgreen] (-0.5,0.2) -- (-0.3,0.5);
\end{scope}
}
\def\lslantsosplaq(#1,#2){
\begin{scope}[shift={(#1,#2)}]
\draw[dgreen] (-0.5,-0.5) -- (0.5,-0.5) -- (0.5,0.2) -- (-0.5,0.5) -- (-0.5,-0.5) ++(1,1);
\draw(0.5,0.2) node[bblob] {};
\draw(0.5,-0.5) node[bblob] {};
\draw(-0.5,0.5) node[bblob] {};
\draw(-0.5,-0.5) node[bblob] {};
\draw[dgreen] (-0.5,0.2) -- (-0.3,0.4);
\end{scope}
}
\def\uslantsosplaq(#1,#2){
\begin{scope}[shift={(#1,#2)}]
\draw[dgreen] (-0.5,-0.5) -- (0.5,-0.2) -- (0.5,0.5) -- (-0.5,0.5) -- (-0.5,-0.5) ++(1,1);
\draw(0.5,0.5) node[bblob] {};
\draw(0.5,-0.2) node[bblob] {};
\draw(-0.5,0.5) node[bblob] {};
\draw(-0.5,-0.5) node[bblob] {};
\draw[dgreen] (-0.5,0.2) -- (-0.3,0.5);
\end{scope}
}
\def\luslantsosplaq(#1,#2){
\begin{scope}[shift={(#1,#2)}]
\draw[dgreen] (-0.5,-0.2) -- (0.5,-0.5) -- (0.5,0.5) -- (-0.5,0.5) -- (-0.5,-0.2) ++(1,1);
\draw(0.5,0.5) node[bblob] {};
\draw(0.5,-0.5) node[bblob] {};
\draw(-0.5,0.5) node[bblob] {};
\draw(-0.5,-0.2) node[bblob] {};
\draw[dgreen] (-0.5,0.2) -- (-0.3,0.5);
\end{scope}
}
\def\rcasosplaq(#1,#2){
\begin{scope}[shift={(#1,#2)}]
\draw[dgreen] (-0.75,-0.5) -- (0.5,-0.5) -- (0.5,0.5) -- (-0.5,0.5) -- (-0.75,-0.5) ++(1,1);
\draw(0.5,0.5) node[bblob] {};
\draw(0.5,-0.5) node[bblob] {};
\draw(-0.5,0.5) node[bblob] {};
\draw(-0.75,-0.5) node[bblob] {};
\draw[dgreen] (-0.6,0.15) -- (-0.3,0.5);
\end{scope}
}
\def\rcbsosplaq(#1,#2){
\begin{scope}[shift={(#1,#2)}]
\draw[dgreen] (-0.5,-0.5) -- (0.75,-0.5) -- (0.5,0.5) -- (-0.5,0.5) -- (-0.5,-0.5) ++(1,1);
\draw(0.5,0.5) node[bblob] {};
\draw(0.75,-0.5) node[bblob] {};
\draw(-0.5,0.5) node[bblob] {};
\draw(-0.5,-0.5) node[bblob] {};
\draw[dgreen] (-0.5,0.2) -- (-0.3,0.5);
\end{scope}
}
\def\rwidelsosplaq(#1,#2){
\begin{scope}[shift={(#1,#2)}]
\draw[dgreen] (-0.5,-0.2) rectangle (1,0.5) ++(1,1);
\draw(1,0.5) node[bblob] {};
\draw(1,-0.2) node[bblob] {};
\draw(-0.5,0.5) node[bblob] {};
\draw(-0.5,-0.2) node[bblob] {};
\draw[dgreen] (-0.5,0.2) -- (-0.3,0.5);
\end{scope}
}
\def\rwidesosplaq(#1,#2){
\begin{scope}[shift={(#1,#2)}]
\draw[dgreen] (-0.5,-0.5) rectangle (1,0.5) ++(1,1);
\draw(1,0.5) node[bblob] {};
\draw(1,-0.5) node[bblob] {};
\draw(-0.5,0.5) node[bblob] {};
\draw(-0.5,-0.5) node[bblob] {};
\draw[dgreen] (-0.5,0.2) -- (-0.3,0.5);
\end{scope}
}
\def\tallsosplaq(#1,#2){
\begin{scope}[shift={(#1,#2)}]
\draw[dgreen] (-0.5,-0.5) rectangle (0.5,0.8) ++(1,1);
\draw(0.5,0.8) node[bblob] {};
\draw(0.5,-0.5) node[bblob] {};
\draw(-0.5,0.8) node[bblob] {};
\draw(-0.5,-0.5) node[bblob] {};
\draw[dgreen] (-0.5,0.5) -- (-0.3,0.8);
\end{scope}
}
\tikzset{
  wavy/.style={
    decorate,
    decoration={
      prewavy,
      meta-amplitude=#1,
      meta-segment length=0.3cm,
      amplitude=1.5pt, 
      segment length=6pt 
},
    postaction={decorate,ultra thick,decoration={markings,mark = at position #1 with {\arrow{>}}}}        
  },
  wavy/.default=0.5
}
\tikzset{oper/.style={rectangle,fill,inner sep=2.5pt}}
\tikzset{arr/.style={postaction={decorate,thick,decoration={markings,mark = at position #1 with {\arrow{>}}}}}}
\tikzset{
  abline/.style={blue,dashed,
    decorate,
    decoration={
      meta-amplitude=#1,
      meta-segment length=0.3cm,
},
    postaction={decorate,ultra thick,decoration={markings,mark = at position #1 with {\arrow{>}}}}        
  },
  abline/.default=0.5
}
\tikzset{
  aline/.style={
    decorate,
    decoration={
      meta-amplitude=#1,
      meta-segment length=0.3cm,
},
    postaction={decorate,ultra thick,decoration={markings,mark = at position #1 with {\arrow{>}}}}        
  },
  aline/.default=0.5
}
\tikzset{
  agline/.style={green!50!black,thick,
    decorate,
    decoration={
      meta-amplitude=#1,
      meta-segment length=0.3cm,
},
    postaction={decorate,ultra thick,decoration={markings,mark = at position #1 with {\arrow{>}}}}        
  },
  agline/.default=0.5
}
\tikzset{
  gline/.style={color=green!50!black,thick}}
\newtheorem{theorem}{Theorem}
\newtheorem{proposition}[theorem]{Proposition}
\newcommand{\be}{\begin{eqnarray}}
  \newcommand{\ee}{\end{eqnarray}}
\newcommand{\ben}{\begin{eqnarray*}}
  \newcommand{\een}{\end{eqnarray*}}
\newcommand{\bec}{\begin{equation}\begin{array}{lll}}
    \newcommand{\eec}{\end{array}\end{equation}}
\newcommand{\C}{\mathbb{C}}
\newcommand{\cC}{{\mathcal C}}
\newcommand{\cA}{{\mathcal A}}
\newcommand{\cB}{{\mathcal B}}
\newcommand{\cD}{{\mathcal D}}
\newcommand{\gl}{\lambda}
\newcommand{\ep}{\varepsilon}
\newcommand{\id}{\mathbb{I}}
\newcommand{\ot}{\otimes}
\newcommand{\sli}{\sum\limits}
\newcommand{\nn}{\nonumber}
\newcommand{\Ker}{\hbox{Ker}}
\newcommand{\Ima}{\hbox{Im}}
\newcommand{\ra}{\rightarrow}
\newcommand{\vac}{\omega^{(N)}}
\title{Lattice Supersymmetry in the Open XXZ Model: An Algebraic Bethe Ansatz Analysis}
\author{Robert Weston\thanks{\tt R.A.Weston@hw.ac.uk}\;}
\author{Junye Yang\thanks{\tt sttegunius@gmail.com }}
\affil{Department of Mathematics, Heriot Watt University, Edinburgh EH14 4AS, UK,
  and Maxwell Institute for Mathematical Sciences, Edinburgh, UK}
\date{September 2017}
\begin{document}
\maketitle

\bibliographystyle{unsrt}
\begin{abstract}
  \noindent 
  We reconsider the open XXZ chain of Yang and Fendley. This model possesses a lattice supersymmetry which changes the length of the chain by one site.  We perform an algebraic Bethe ansatz analysis of the model and derive the commutation relations of the lattice SUSY operators with the four elements of the open-chain monodromy matrix. Hence we give the action of the SUSY operator on off-shell and on-shell Bethe states. We show that this action generally takes one on-shell Bethe eigenstate to another. The exception is that a zero-energy vacuum state will be a SUSY singlet. The SUSY pairings of Bethe roots we obtain are analogous to those found previously for closed chains by Fendley and Hagendorf by analysing  the Bethe equations.
\end{abstract}

\nopagebreak

\section{Introduction}
In this paper we revisit the lattice supersymmetry (SUSY) of the open XXZ chain with the following quantum Hamiltonian:
\be H^{(N)}=-\frac{1}{2} \sli_{i=1}^{N-1} \left(\sigma^x_i \sigma^x_{i+1} + \sigma^y_i \sigma^y_{i+1} -\frac{1}{2} \sigma^z_i \sigma^z_{i+1}\right)-\frac{1}{4} (\sigma_1^z+\sigma_N^z)+\left(\frac{3N-1}{4}\right) \id .\label{eq:XXZ1}\ee
This Hamiltonian acts on the vector space $V^{\ot N}$ where $V\simeq \C^2$.
The surprising observation of \cite{MR2025875,MR2089982} was that this Hamiltonian can be expressed in terms of operators $Q^{(N)}: V^{\ot N}\ra V^{\ot N-1}$ and $Q^{(N)\dagger}:V^{(N-1)}\ra V^{(N)}$ as 
\be H^{(N)}= Q^{(N)\dag} Q^{(N)}+ Q^{(N+1)}  Q^{(N+1)\dag}.\label{eq:iHSUSYdef}\ee
Moreover these operators $Q^{(N)}$ and $Q^{(N)\dag}$ are nilpotent, i.e., 
\be  Q^{(N-1)} Q^{(N)}=0,\quad Q^{(N+1)\dagger} Q^{(N)\dagger}=0,\label{eq:inilpotency}\ee
from which is follows that 
\be H^{(N-1)} Q^{(N)}=Q^{(N)} H^{(N)},\quad  H^{(N)} Q^{(N)\dag}=Q^{(N)\dag} H^{(N-1)}.\label{eq:iSUSY}\ee
The properties \eqref{eq:iHSUSYdef}--\eqref{eq:iSUSY} allowed the authors of \cite{MR2025875,MR2089982} to characterise $H^{(N)}$ as a supersymmetric Hamiltonian and $Q^{(N)}$, $Q^{(N)\dag}$ as SUSY operators. The novel aspect of this SUSY is that it changes the length of the lattice by one unit.

 The Hamiltonian \eqref{eq:XXZ1} and its lattice SUSY are of interest for various reasons. Firstly, the SUSY allows us to characterise a vacuum eigenstate as a  SUSY singlet with zero energy (the simple reasons for this are explained nicely in the recent paper \cite{MR3635719}), and to organise the remaining spectrum into SUSY doublets. Secondly, the summation term in \eqref{eq:XXZ1} is that of the XXZ model at its `combinatorial point' with anisotropy parameter $\Delta=-\frac{1}{2}$. 
The ground state of the closed XXZ chain at this point has been much studied \cite{zinn2010six} and it displays a rich combinatorial structure that is realised most famously in the Razumov-Stroganov theorem \cite{MR2077318,MR2771600}. Recently, interesting combinatorial structure has also been observed  in the ground state of the open Hamiltonian \eqref{eq:XXZ1} \cite{MR3635719}. Finally, the XXZ model is of course a paradigmatic model of quantum integrability, with applications ranging from laboratory 
condensed-matter systems \cite{PhysRevLett.106.217203} to the AdS/CFT correspondence \cite{Beisert:2010jr} to non-equilibrium stochastic models \cite{derrida1993exact}. 

The type of lattice SUSY displayed by \eqref{eq:XXZ1} has now been found and studied, primarily by Fendley and Hagendorf and collaborators, in a growing range of open and closed quantum spin chains \cite{MR2725542,MR2903041,MR3024149, MR3314720, MR3514161,  MR3635719,meidinger2014dynamic}. In particular,  the manifestation of SUSY as pairings of  sets of roots of the Bethe Equations was observed and studied for the
 closed 8-vertex model along its combinatorial line in \cite{MR2903041}. Interesting observations regarding the connection between lattice SUSY and Bethe roots in other models can be found in \cite{MR3314720,MR3280007}.

In this paper we carry out an algebraic Bethe ansatz analysis of the original open Hamiltonian \eqref{eq:XXZ1}.
Following Sklyanin \cite{MR953215}, we consider the components $\cA^{N)}(u)$, $\cB^{(N)}(u)$, $\cC^{(N)}(u)$, and $\cD^{(N)}(u)$ of the double-row monodromy matrix.   The main result of this paper is Theorem \ref{thm1} in Section 4, which expresses the commutation relations of the SUSY operator $Q^{(N)}$ with $\cA^{N)}(u)$, $\cB^{(N)}(u)$, $\cC^{(N)}(u)$, and $\cD^{(N)}(u)$. As such it gives the connection between the SUSY operators and the underlying reflection algebra \cite{MR953215,MR1193836}.  
The action of $Q^{(N)}$ on both off and on-shell Bethe states is then a simple corollary of our main theorem. 
Recall that `off-shell' Bethe states are constructed as 
\ben \cB^{(N)}(\gl_1) \cB^{(N)}(\gl_2) \cdots \cB^{(N)}(\gl_m) \Omega^{(N)},\een
where $\Omega^{(N)}$ is the pseudo-vacuum with all spins up and $\gl_i$ are generic complex numbers. It follows from our Theorem \ref{thm1} that
\be
&&\hspace*{-5mm}Q^{(N)} \,\cB^{(N)}(\lambda_1)\,\cB^{(N)}(\lambda_2) \cdots \cB^{(N)}(\lambda_m) \, \Omega^{(N)}\nn\\[1mm]
&&
\propto \cB^{(N-1)}(\lambda_1)\,\cB^{(N-1)}(\lambda_2) \cdots \cB^{(N-1)}(\lambda_m) \,\cB^{(N-1)}(2\pi i/3) \, \Omega^{(N-1)},\quad\hbox{ if }
\gl_i\neq 2\pi i/3\hbox { for all }i, \nn\\
&&= 0 \quad \hbox{otherwise}.\label{eq:qreln}
\ee
Moreover, we find that if $\{\gl_1,\gl_2,\cdots, \gl_m\}$ obey the Bethe equations \eqref{eq:Beqns} for the length $N$ chain (that is they are Bethe roots, such that the resulting Bethe state is `on-shell') with $\gl_i\neq2\pi i/3$ for all $i$, then $\{\gl_1,\gl_2,\cdots, \gl_m,2\pi i/3\}$ obey the Bethe equations for the length $N-1$ chain. 
The result \eqref{eq:qreln} complements and extends the previous results concerning the existence of SUSY pairings of Bethe roots observed for closed  chains \cite{MR2903041}, and also explains such pairings directly in terms of the action of the lattice SUSY on Bethe states. 

The layout of this paper is as follows: in Section 2 we review the construction given in \cite{MR2089982} of the Hamiltonian \eqref{eq:XXZ1} in terms of SUSY operators. Then we briefly describe the necessary aspects of the algebraic Bethe ansatz \cite{MR953215} approach to open chains in Section 3. In Section 4 we present our main results regarding the commutation relations of the SUSY operators and the components of the double-row monodromy matrix, and hence give the action of the SUSY operators on off-shell and on-shell Bethe states.  Finally, we discuss the potential uses of our results in Section 5.

\section{Lattice SUSY of the Open XXZ Model}
In this section, we recall the supersymmetric picture of the open XXZ Hamiltonian $H^{(N)}$ defined by Equation \eqref{eq:XXZ1}.
It was discovered in \cite{MR2025875, MR2089982} that this model possesses a lattice SUSY that relates $H^{(N)}$ and $H^{(N-1)}$. More precisely, let us define the vector space $V=\C v_+\oplus \C v_-$, such that $H^{(N)}: V^{\ot N} \ra V^{\ot N}$.  Then one can define operators
 $q: V\ot V \ra V$ and $q^\dagger: V \ra V\ot V$ by 
\be q(v_{\ep_1}\ot v_{\ep_2})= \delta_{\ep_1,+}\delta_{\ep_2,+} \, v_-,\quad q^\dagger \, v_\ep = \delta_{\ep,-} v_+\ot v_+,\label{eq:qdef}\ee The operators $q$ and $q^\dagger$ obeys the associativity and coassociativity conditions 
\be q(q\ot \id) = q(\id \ot q),\quad (q^\dag\ot \id) q^\dag = (\id \ot q^\dag) q^\dag.\label{eq:coassoc}\ee
We then define operators $Q^{(N)}: V^{\ot N} \ra V^{\ot N-1}$  and $Q^{(N)\dag}: V^{\ot N-1}\ra V^{\ot N}$ by \be
Q^{(N)}=\sli_{i=1}^{N-1} (-1)^{i+1} q_{i,i+1},\quad Q^{(N)\dag}=\sli_{i}^{N-1} (-1)^{i+1} q^\dag_i,\label{eq:Qdef}\ee
where subscripts indicate the copies of $V$ in the $V^{\ot N}$ on which operators act non-trivially.\footnote{Note that we swap notations $q\leftrightarrow q\dagger$ and $Q
 \leftrightarrow Q^\dagger$ compared to the work of Fendley, Hagendorf and collaborators \cite{MR2089982,MR3024149} - we like to think of $Q^\dag$ as creating a lattice site.}

It follows from Equation \eqref{eq:coassoc} that $Q^{(N)}$ and $Q^{(N)\dagger}$ are nilpotent: 
\be  Q^{(N-1)} Q^{(N)}=0,\quad Q^{(N+1)\dag} Q^{(N)\dag}=0,\label{eq:nilpotency}\ee
and it is a relatively simple calculation to show that we can write 
 \be H^{(N)}= Q^{(N)\dag} Q^{(N)}+ Q^{(N+1)}  Q^{(N+1)\dag}: V^{\ot N} \ra V^{\ot N}.\label{eq:HSUSYdef}\ee
Then it follows from the nilpotency conditions \eqref{eq:nilpotency} that we have 
\be H^{(N-1)} Q^{(N)}=Q^{(N)} H^{(N)},\quad  H^{(N)} Q^{(N)\dag}=Q^{(N)\dag} H^{(N-1)}.\label{eq:SUSY}\ee
As mentioned in the Introduction, the properties \eqref{eq:nilpotency}--\eqref{eq:SUSY} allow us to regard $H^{(N)}$ as a SUSY Hamiltonian (for a simple introduction to SUSY in quantum mechanics see for example \cite{MR1849169}).

\section{The Algebraic Bethe Ansatz Analysis of the SUSY XXZ Chain}
In this section we recall the algebraic Bethe ansatz approach to the open XXZ chain formulated by Sklyanin \cite{MR953215}. We start from an R-matrix $R(u):V\ot V \rightarrow V \ot V$ and K-matrices $K^{\pm}(u):V\rightarrow V$
given by\footnote{We specialise the boundary parameters of  \cite{MR953215} to be $\xi_+=\xi-=-\eta$.}
\be R(u)=\begin{pmatrix} a(u) & 0 & 0 & 0 \\ 0  & b(u) & c(u) & 0\\ 0  & c(u) & b(u) & 0\\
0 & 0 & 0 & a(u) \end{pmatrix}, \quad K^-(u)=\begin{pmatrix} d(u) & 0 \\ 0 & -a(u)\end{pmatrix}, \quad K^+(u)=K^-(u+\eta), \label{eq:RKdef},\ee
where \be
a(u)=\frac{\sinh(u+\eta)}{\sinh(\eta)},\quad b(u)=\frac{\sinh(u)}{\sinh(\eta)},\quad c(u)=1,\quad d(u)=\frac{\sinh(u-\eta)}{\sinh(\eta)}.
\label{eq:addef}\ee\newpage
\noindent The matrices $R(u)$ and $K^\pm(u)$ obey the Yang-Baxter, boundary Yang-Baxter and unitarity/crossing-unitarity relations which can be found in 
 \cite{MR953215}.
 
In terms of these operators, we can construct a monodromy matrix 
\be U^{(N)}(u)= \overline{T}^{(N)}(u) K_0^-(u)\, T^{(N)}(u)= \begin{pmatrix} \mathcal{A}^{(N)}(u) & \mathcal{B}^{(N)}(u)  \\ \mathcal{C}^{(N)}(u) &  \mathcal{D}^{(N)}(u)  \end{pmatrix}\in \hbox{End}(V_0\ot V_N \ot \cdots \ot V_2\ot V_1) ,\nn\\ \label{eq:Udef} \\[-12mm]\nn\ee
\ben \hspace*{-16mm}\hbox{with}\quad T^{(N)}(u)= R_{01}(u)R_{02}(u) \cdots R_{0N}(u), \quad \overline{T}^{(N)}(u)= R_{N0}(u) \cdots R_{20}(u) R_{10}(u). \een
Here and elsewhere the subscripts denote the relevant labels\footnote{Our seemingly idiosyncratic choice of $0,N,.\cdots,2,1$ for the ordering of labels is useful for the inductive proof of Theorem 1} of the copies of $V$.

We define a transfer matrix $t^{(N)}(u)$ by
\ben t^{(N)}(u)= \hbox{Tr}_{V_0} \left( K^+_0(u)U^{(N)}(u)\right)=\frac{\sinh(u)}{\sinh(\eta)} \cA^{(N)}(u) - \frac{\sinh(u+2\eta)}{\sinh(\eta)}\cD^{(N)}(u),\een
for which it is shown in \cite{MR953215} that the standard quantum integrability condition holds. Namely,
\ben [t^{(N)}(u),t^{(N)}(v)]=0.\een
Taking the derivative of the transfer matrix gives
\ben t^{(N)\prime}(0)&=&\hbox{Tr}_{V_0} \left( K^{+\prime}_0(0))\, K^{-}_1(0)\right)\\&&+\hbox{Tr}_{V_0} \left( K^+_0(0)\right) 
\left[K^{-\prime}_1(0) + 2\sli_{i=1}^{N-1} h_{i+1,i} K^-_1(0) \right] +2\,\hbox{Tr}_{V_0} \left( K^{+}_0(0) h_{0N}\right)  K^-_1(0),\\\hbox{where      }\quad \quad
h&=&PR'(0)=\frac{1}{2\sinh(\eta)} \left(\sigma^x\ot \sigma^x+\sigma^y\ot \sigma^y + \cosh(\eta) \,\sigma^z \ot \sigma^z\right)+ \frac{\coth(\eta)}{2} \,\id.\een
Inserting the expressions \eqref{eq:RKdef}--\eqref{eq:addef} for $K^\pm(u)$ gives
\ben  
t^{(N)\prime}(0)= \frac{2\cosh(2\eta)}{\sinh(\eta)} \id + 4 \cosh(\eta) \sli_{i=1}^{N-1} h_{i+1,i} -\frac{2\cosh^2(\eta)}{\sinh(\eta)} (\sigma_1^z+\sigma_N^z),\een
and specialising to $\eta=\frac{2\pi i}{3}$, which we shall do for the rest of this paper,  we find 
\be t^{(N)\prime}(0)= \frac{-4i}{\sqrt{3}}
 \left[H^{(N)}+ \frac{(1-N)}{2} \id\right] ,\label{eq:tHcon}\ee 
where $H^{(N)}$ is as defined in Equation \eqref{eq:XXZ1}. It is useful for later purposes to note that after this $\eta=\frac{2\pi i}{3}$ specialisation we have the following identities for the functions \eqref{eq:addef} involved in the definitions of the $R$ and $K$ matrices:
\begin{equation} a(u)+b(u)+d(u)=0, \quad a(u)^2+b(u)^2-c(u)^2+a(u)b(u) =0.\label{eq:ids}\end{equation}

The algebraic Bethe ansatz for our open chain starts from the ansatz that at least some common eigenvectors of $t^{(N)}(u)$ and $H^{(N)}$ can be written in the form 
\be \cB^{(N)}(\lambda_1) \cB^{(N)}(\lambda_2) \cdots \cB^{(N)}(\lambda_m) \, \Omega^{(N)},\quad m\leq N,\label{eq:Bstate}\ee
where $\Omega^{(N)}$ is the `pseudovacuum' defined by $\Omega^{(N)}=v_+\ot v_+ \ot \cdots \ot v_+ \in V^{\ot N}$.
Demanding that the state \eqref{eq:Bstate} is an eigenvalue of $t^{(N)}(u)$ yields the Bethe equations\footnote{Our Bethe Equations differ slightly
from Sklyanin's\cite{MR953215}. The reason is that we use spectral parameters $u$ and $0$ on our auxiliary and quantum spaces, whereas Sklyanin uses $u-\eta/2$ and $-\eta/2$.}
\be -\sinh(2\gl_j) \frac{\Delta_+^{(N)}(\gl_j)}{\Delta_-^{(N)}(\gl_j)}= \prod_{k\neq j} \frac{\sinh(\gl_j-\gl_k+\eta)\sinh(\gl_j+\gl_k+2\eta) }{\sinh(\gl_j-\gl_k-\eta)\sinh(\gl_j+\gl_k)}\label{eq:Beqns} \ee
where $\Delta_\pm^{(N)}(\lambda)$ are related to the eigenvalues of $\Omega^{(N)}$ with respect to $\cA^{(N)}(\lambda)$ and $\cD^{(N)}(\lambda)$ thus:
\be\cA^{(N)}(\lambda) \Omega^{(N)}= \Delta_+^{(N)}(\lambda) \Omega^{(N)}, \quad \left[\cD^{(N)}(\lambda) \sinh(2\lambda+\eta) -\cA^{(N)}(\lambda)\sinh(\eta)\right] \Omega^{(N)}= \Delta_-^{(N)}(\lambda) \Omega^{(N)}.\nn\\ \label{eq:Deltadef} \ee
Explicit expressions for $\Delta_\pm^{(N)}(\lambda)$ are derived in Section 4 (see Equation \eqref{eq:dpm}).
The corresponding eigenvalue of the transfer matrix $t^{(N)}(u)$  (which we denote by $\tau^{(N)}(u;\lambda_1,\lambda_2,\cdots,\lambda_m)$) is then
\be \tau^{(N)}(u;\lambda_1,\lambda_2,\cdots,\lambda_m) &=& \frac{\sinh(2u-\eta)\sin(u-\eta) \Delta_+^{(N)}(u) }{\sinh(2u+\eta)} \prod\limits_{j=1}^m 
\frac{\sinh(u-\lambda_j-\eta)\sinh(u+\lambda_j)}{\sinh(u-\lambda_j)\sinh(u+\lambda_j+\eta)}\nonumber\\
&-&\frac{\sinh(u-\eta) \Delta_-^{(N)}(u)}{\sinh(2u+\eta)} \prod\limits_{j=1}^m 
\frac{\sinh(u-\lambda_j+\eta)\sinh(u+\lambda_j-\eta)}{\sinh(u-\lambda_j)\sinh(u+\lambda_j+\eta)}.
\label{eq:teigenval}\ee

Following the standard convention, we refer to states of the form \eqref{eq:Bstate}, in which the $\lambda_i$ take general complex values as `off-shell' Bethe states. If the  $\lambda_i$ are restricted by \eqref{eq:Beqns}, then we refer to the states as `on-shell' Bethe states.

\section{The Action of SUSY Operators on Bethe States}
In this section, we consider the action of the SUSY generator $Q^{(N)}$ on first off-shell and then on-shell Bethe states. That is, we consider the state \ben
Q^{(N)} \cB^{(N)}(\lambda_1)\cB^{(N)}(\lambda_2) \cdots \cB^{(N)}(\lambda_m) \, \Omega^{(N)}.\een The key result is Theorem 1 below. 
To simplify the proof of this theorem it useful to introduce some additional notation. 
Firstly, we define operators $E^{\ep_1}_{\ep_2}:V\ra V$, $E^{\ep_1}: V\ra \C$ and  $E^{\ep_1\ep_2}_{\;\;\ep_3}: V\ot V \ra V$, with $\ep_i\in\{+,-\}$, by
\ben E^{\ep_1}_{\ep_2} v_{\ep'_1} = \delta_{\ep_1,\ep'_1} v_{\ep_2},\quad E^{\ep_1} v_{\ep'_1}=\delta_{\ep_1,\ep'_1},\quad E^{\ep_1\ep_2}_{\;\;\ep_3} (v_{\ep'_1}\ot v_{\ep'_2}) = \delta_{\ep_1,\ep'_1}\delta_{\ep_2,\ep'_2}\, v_{\ep_3}.\een
Note that the operator $q$ defined by \eqref{eq:qdef} is then given by $q=E^{++}_-$. Clearly we have \be E^{\ep_1} \ot E^{\ep_2}_{\ep_3}= E^{\ep_1\ep_2}_{\;\;\ep_3}= E^{\ep_1}_{\ep_3} \ot E^{\ep_2}.\label{eq:Eprop}\ee
Secondly, we make use of the standard pictorial representations
\ben 
\begin{tikzpicture}[scale=0.75]
\draw(-2,-1) node[left] {$R(u_1-u_2)^{\ep_1\ep_2}_{\ep'_1\ep'_2}\;=$};
 \draw[aline=0.9](-1,-0.5) -- (1,-0.5);
  \draw[aline=0.9](0,0.5) -- (0,-1.5);
\draw(0,0.5) node[above] {$\ep_2$};
\draw(-1,-0.5) node[left] {$\ep_1$};
\draw(1,-0.5) node[right] {$\ep'_1$};
\draw(0,-1.5) node[below] {$\ep'_2$};
\draw(-1/2,-0.5) node[below] {$u_1$};
\draw(0.3,0) node {$u_2$};
\end{tikzpicture}
\quad
\begin{tikzpicture}[scale=0.75]
\draw(-1,0) node[left] {$  K^-(u)^{\ep}_{\ep'} \;=$}; \draw (0,1) arc (90:-90:1);
\draw(1,0)  node[bblob]{};
\draw[aline=1] (-0.5,1) --(0,1);
\draw[aline=0] (0,-1) --(-0.5,-1);
\draw(-0.5,1) node[left] {$\ep$};
\draw(-0.5,-1) node[left] {$\ep'$};
\draw(1.3,0.7) node {$u$};
\draw(1.2,-0.7) node {$-u$};
\end{tikzpicture}
\quad
\begin{tikzpicture}[scale=0.75]
\draw(-1,0) node[left] {$  K^+(u)^{\ep}_{\ep'} \;=$}; 
\draw (0.5,1) arc (90:270:1);
\draw(-0.5,0)  node[bblob]{};
\draw[aline=0] (0.5,1) --(1,1);
\draw[aline=1] (1,-1) --(0.5,-1);
\draw(1,1) node[right] {$\ep'$};
\draw(1,-1) node[right] {$\ep$};
\draw(-0.9,0.7) node {$u$};
\draw(-0.7,-0.7) node {$-u$};
\end{tikzpicture}
\een
This leads to the following convenient pictorial representation for the operators $U^{(N)}(u)$ and $t^{(N)}(u)$ defined in Section 3:

\be
\begin{tikzpicture}[scale=0.5]
  \draw[aline=0.95](-7,1) -- (1,1);
  \draw[aline=0.05](1,-1) -- (-7,-1);
\foreach\x in {-6,-4,-2,0}{
    \draw[aline=0.9](\x,2) -- (\x,-2);
  }
 \draw (1,1) arc (90:-90:1);
\draw(2,0)  node[bblob]{};
 \draw(1,1.1) node[above] {$u$};
\draw(1,-1.1) node[below] {$-u$};
\draw(-10,0) node[left] {$U^{(N)}(u)=$};
\draw(-6,2.5) node[above] {$N$};
\draw(-4,2.5) node[above] {$\cdots$};
\draw(-2,2.5) node[above] {$2$};
\draw(-0,2.5) node[above] {$1$};
\draw(-6,0) node[right] {$0$};
\draw(-4,0) node[right] {$0$};
\draw(-2,0) node[right] {$0$};\draw(-0,0) node[right] {$0$};
\end{tikzpicture}\label{eq:Upic}\\[4mm]
\begin{tikzpicture}[scale=0.5]
  \draw[aline=0.95](-7,1) -- (1,1);
  \draw[aline=0.05](1,-1) -- (-7,-1);
\foreach\x in {-6,-4,-2,0}{
    \draw[aline=0.9](\x,2) -- (\x,-2);
  }
 \draw (1,1) arc (90:-90:1);
  \draw (-7,-1) arc (-90:-270:1);
\draw(2,0)  node[bblob]{};
\draw(-8,0)  node[bblob]{};
 \draw(1,1.1) node[above] {$u$};
\draw(1,-1.1) node[below] {$-u$};
\draw(-10,0) node[left] {$t^{(N)}(u)=$};
\draw(-6,2.5) node[above] {$N$};
\draw(-4,2.5) node[above] {$\cdots$};
\draw(-2,2.5) node[above] {$2$};
\draw(-0,2.5) node[above] {$1$};
\draw(-6,0) node[right] {$0$};
\draw(-4,0) node[right] {$0$};
\draw(-2,0) node[right] {$0$};\draw(-0,0) node[right] {$0$};
\end{tikzpicture}
\ee
where $u$, $-u$ are $0$ are the additive spectral parameters, and
 the numbers $N, \cdots,2,1$ label the vertical quantum spaces.

We now present the key theorem regarding the commutation relation of $Q^{(N)}$ with the elements of the monodromy matrix $U^{(N)}(u)$. In the statement and proof of this theorem, we suppress function arguments and abbreviate $\cA^{(N)}=\cA^{(N)}(u),\cdots, \cD^{(N)}=\cD^{(N)}(u)$ and $a=a(u)$, \dots, $d=d(u)$ (where $a,b,c,d$ are defined in Equation \eqref{eq:addef}). 
\begin{theorem} \label{thm1} We have the following commutation relations for $N\geq 2$:
\be
Q^{(N)} \cA^{(N)}- d^2 \cA^{(N-1)} Q^{(N)}&\hspace*{-2mm}=&\hspace*{-2mm} (-1)^N c\, d\,  E^+ \ot \cB^{(N-1)} \label{eq:ft1},\\
Q^{(N)} \cB^{(N)}- d^2 \cB^{(N-1)} Q^{(N)} &\hspace*{-2mm}=&\hspace*{-2mm}0\label{eq:ft2},\\
\hspace*{-4mm}Q^{(N)} \cC^{(N)}- d^2 \cC^{(N-1)} Q^{(N)}&\hspace*{-2mm}=&\hspace*{-2mm}(-1)^N \left( a\, c\,  E^+\ot \cA^{(N-1)} +c^2\,  
E^-\ot \cB^{(N-1)} +  c\, d\,  E^+\ot \cD^{(N-1)} \right)\!,\qquad\label{eq:ft3}\\
Q^{(N)} \cD^{(N)}- d^2 \cD^{(N-1)} Q^{(N)} &\hspace*{-2mm}=&\hspace*{-2mm}(-1)^N  b\, c\,  E^+\ot \cB^{(N-1)}\label{eq:ft4}.
\ee 

\end{theorem}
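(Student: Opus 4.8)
The plan is to establish \eqref{eq:ft1}--\eqref{eq:ft4} simultaneously, by induction on $N$, by regarding them as the four auxiliary-space ($V_0$) matrix entries of the single operator identity
\[
Q^{(N)}\,U^{(N)}(u)=d^2\,U^{(N-1)}(u)\,Q^{(N)}+(-1)^N M^{(N-1)}(u),
\]
where the entrywise products $\cA^{(N-1)}Q^{(N)},\dots$ are understood and $M^{(N-1)}$ is the $2\times2$ source matrix read off from the right-hand sides of the theorem. The two ingredients I would set up first are the recursions that strip the outermost quantum site $N$,
\[
U^{(N)}(u)=R_{N0}(u)\,U^{(N-1)}(u)\,R_{0N}(u),\qquad Q^{(N)}=\big(\id\ot Q^{(N-1)}\big)+(-1)^N\,q_{N-1,N},
\]
with $\id$ acting on $V_N$ and $Q^{(N-1)}$ on the inner block $V_{N-1}\ot\cdots\ot V_1$. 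The first follows from $T^{(N)}=T^{(N-1)}R_{0N}$, $\overline T^{(N)}=R_{N0}\overline T^{(N-1)}$ and \eqref{eq:Udef} (this is exactly what the ordering $0,N,\dots,1$ is arranged to make clean); the second is just the $i=N-1$ term of \eqref{eq:Qdef} separated off, and I note that it carries precisely the sign $(-1)^N$ that decorates the source terms in the statement.

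For the base case $N=2$ one has $Q^{(2)}=q_{1,2}$ while $U^{(1)}(u)=R_{10}(u)K_0^-(u)R_{01}(u)$ acts on a single site, so the four relations reduce to a finite computation with the explicit matrices \eqref{eq:RKdef}--\eqref{eq:addef}, which I would simply carry out. For the inductive step ($N\geq3$) I would substitute both recursions into $Q^{(N)}U^{(N)}(u)$. The spectator piece $\id\ot Q^{(N-1)}$ commutes with the site-$N$ factors $R_{N0},R_{0N}$ and, acting on $U^{(N-1)}(u)$, is reduced by the inductive hypothesis for $Q^{(N-1)}U^{(N-1)}(u)$; its output, still dressed by the site-$N$ $R$-matrices, must then be recombined with the boundary contribution. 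The genuinely new content is the boundary term $(-1)^N q_{N-1,N}$, which fuses the new site $N$ with the outermost site $N-1$ of the inner block. After also peeling site $N-1$ via $U^{(N-1)}=R_{N-1,0}U^{(N-2)}R_{0,N-1}$, this term is controlled by a purely local identity — most transparent in the graphical calculus of \eqref{eq:Upic} — describing how $q$ intertwines with the four $R$-matrices attached to sites $N,N-1$ (two from $T$, two from $\overline T$), since $q_{N-1,N}$ commutes past the deeper block $U^{(N-2)}$. A direct evaluation of, say, $q_{1,2}R_{01}(u)R_{02}(u)$ on the eight states of $V_0\ot V_1\ot V_2$ already exhibits the mechanism: it yields a diagonal intertwined part (coefficients $a^2$ and $b^2$ in $V_0$) proportional to $q_{1,2}$, together with off-diagonal defect terms carrying exactly the factors $a\,c$ and $b\,c$. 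It is these defects, assembled with their $\overline T$-side and $K$-matrix partners, that should produce the terms $E^{+}\ot\cA^{(N-1)}$, $E^{+}\ot\cB^{(N-1)}$, $E^{-}\ot\cB^{(N-1)}$ and $E^{+}\ot\cD^{(N-1)}$ on the right of \eqref{eq:ft1}--\eqref{eq:ft4}.

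I expect the main obstacle to be twofold and essentially bookkeeping. First, extracting the four components $\cA^{(N)},\dots,\cD^{(N)}$ out of $R_{N0}U^{(N-1)}R_{0N}$ means summing over the intermediate $V_0$-index, which entangles the $2\times2$ structure carried by $U^{(N-1)}$ with the site-$N$ operator entries of $R_{0N}$; keeping the four identities mutually consistent through this step, and correctly identifying the fused site so that the inner data reassembles into a genuine length-$(N-1)$ monodromy $U^{(N-1)}$, is where most of the care is needed. Second, and crucially, the clean coefficient must collapse to exactly $d^2$ and the defect coefficients to exactly $c\,d,\,a\,c,\,c^2,\,b\,c$; this is where the specialisation $\eta=2\pi i/3$ enters, through the identities \eqref{eq:ids} — in particular $d=-(a+b)$ and $a^2+b^2-c^2+ab=0$ — which reduce the quadratic expressions in $a,b,c$ generated by the two crossings to the stated monomials. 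The attendant sign bookkeeping, reconciling the $(-1)^N$ of the boundary term with the $(-1)^{N-1}$ delivered by the inductive hypothesis, I expect to be routine once the local identity is fixed.
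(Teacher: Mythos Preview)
Your plan is correct and is essentially the paper's own proof, repackaged as a single $2\times2$ identity in the auxiliary space rather than four component-wise equations. The paper proceeds by exactly the same induction: base case $N=2$ by direct computation, the same splitting $Q^{(N+1)}=\id\ot Q^{(N)}+(-1)^{N+1}q_{N+1,N}$, the same one-site recursion for the monodromy entries (your $U^{(N)}=R_{N0}U^{(N-1)}R_{0N}$ is their \eqref{eq:srrA}--\eqref{eq:srrD}), the same local graphical evaluation of the boundary term against two columns of $R$-matrices, and the same appeal to the $\eta=2\pi i/3$ identities \eqref{eq:ids} to collapse the coefficients; your anticipated bookkeeping obstacles are precisely the computations the paper carries out explicitly in \eqref{eq:rrA}--\eqref{eq:rrD}.
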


\begin{proof} The proof is inductive with respect to $N$. The base step with $N=2$ is simple to show by explicit calculation using $Q^{(2)}=q=E^{++}_{\;-}$, the definition \eqref{eq:Udef} of $\cA^{(N)},\cdots,\cD^{(N)}$ and the the explicit form \eqref{eq:RKdef} 
of the $R$ and $K^-$ matrices. Let us consider the case of \eqref{eq:ft1} for illustrative purposes.
We find \ben Q^{(2)} \cA^{(2)}= a^4 d \, E^{++}_-,\quad \cA^{(1)} Q^{(2)} = (db^2-ac^2) E^{++}_-,\quad \cB^{(1)}=bc(d-a) E^+_-.\een
Hence we have 
\ben Q^{(2)} \cA^{(2)}- d^2 \cA^{(1)} Q^{(2)} - cd E^+\ot \cB^{(1)}= \Big(a^4 d -d^2(db^2-ac^2) -b c^2 d (d-a) \Big) E^{++}_{\;-}=0,\een
where the final equality follows from the identities \eqref{eq:ids}.

 In order to simplify the inductive step, it is useful to introduce the notation
\ben [X]^{(N)}= Q^{(N)} X^{(N)} - d^2 X^{(N-1)} Q^{(N)} : V^{\ot N}\ra V^{\ot N-1}\een for the shifted commutators occurring on the lhs of \eqref{eq:ft1}--\eqref{eq:ft4}. Then in terms of this notation, the main step of the proof is to show the following recursion relations:
\be \,&&\hspace*{-15mm}\cA^{(N+1)}= a^2  E^{+}_{+}\ot \cA^{(N)}  + b^2  E^{-}_{-}\ot \cA^{(N)}
+  ac E^-_+ \ot \cB^{(N)} 
+ ac  E^+_-\ot  \cC^{(N)}
+ c^2  E^-_- \ot \cD^{(N)}\!, \label{eq:srrA}\\
\, &&\hspace*{-15mm}\cB^{(N+1)}=bc E^+_- \ot\cA^{(N)} 
+ ab \, \id \ot \cB^{(N)}
+ bc E^+_- \ot  \cD^{(N)},\label{eq:srrB}\\
&&\hspace*{-15mm}\cC^{(N+1)}=bc  E^-_+ \ot \cA^{(N)} 
+ ab \,\id \ot \cC^{(N)}
+ bc E^-_+ \ot\cD^{(N)}, \label{eq:srrC}\\
&&\hspace*{-15mm}  \cD^{(N+1)}=c^2 E^+_+\ot \cA^{(N)}
+ ac E^-_+ \ot \cB^{(N)} +ac E^+_- \ot \cC^{(N)}
+ b^2  E^+_+ \ot \cD^{(N)}  + a^2  E^-_-  \ot \cD^{(N)}\!, 
\label{eq:srrD}
\ee
and 
\be
   \, [\cA]^{(N+1)}&=& a^2  E^{+}_{+}\ot [\cA]^{(N)}  + b^2  E^{-}_{-}\ot [\cA]^{(N)}
+  ac E^-_+ \ot [\cB]^{(N)} 
+ ac  E^+_-\ot  [\cC]^{(N)}
+ c^2  E^-_- \ot [\cD]^{(N)}\nn \\
&&\hspace*{-10mm}+(-1)^N \left\{(b^2\,d^2\,-a^4) \,  E^{++}_{\;-}\ot \cA^{(N-1)} + acd^2 \, E^{++}_{\;+}\ot \cB^{(N-1)}\right. \nn \\ && \hspace*{-10mm}\left. - a^2 b c \, E^{-+}_{\;-} \ot \cB^{(N-1)} 
- a^3 c  E^{+-}_{\;-}\ot  \cB^{(N-1)} + c^2 d^2 E^{++}_- \ot \cD^{(N-1)}\right\},\label{eq:rrA}\\[3mm]
  \, [\cB]^{(N+1)}&=&bc E^+_- \ot[\cA]^{(N)} 
+ ab \, \id \ot [\cB]^{(N)}
+ bc E^+_- \ot  [\cD]^{(N)}
+ (-1)^N ab(d^2-ab) \, E^{++}_{\;-} \ot \cB^{(N-1)},\nn\\ \label{eq:rrB} \\[3mm]
 \, [\cC]^{(N+1)}&=&bc  E^-_+ \ot [\cA]^{(N)} 
+ ab \,\id \ot [\cC]^{(N)}
+ bc E^-_+ \ot[\cD]^{(N)}
\nn \\
&&\hspace*{-10mm}+(-1)^N \left\{  bcd^2 \, E^{++}_{\;+} \ot A^{(N-1)} 
- b c(a^2+c^2) \, E^{-+}_{\;-} \ot A^{(N-1)}
- ab^2 c \, E^{+-}_{\;-}  \ot A^{(N-1)} \right.\nn \\&&\hspace*{-10mm}
\left. - 2abc^2\,  E^{--}_{\;-} \ot B^{(N-1)}
+ab(d^2-ab) \, E^{++}_{\;-} \ot C^{(N-1)}\right.\nn\\ &&\hspace*{-10mm}\left.
+ bcd^2 \,E^{++}_{\;+}  \ot D^{(N-1)}
-b^3 c  \, E^{-+}_{\;-} \ot D^{(N-1)} -ab^2c\, E^{+-}_{\;-}\ot D^{(N-1)}
 \right\},
\label{eq:rrC} \\[3mm]
 \, [\cD]^{(N+1)}&=&c^2 E^+_+\ot [\cA]^{(N)}
+ ac E^-_+ \ot [\cB]^{(N)} +ac E^+_- \ot [\cC]^{(N)}
+ b^2  E^+_+ \ot [\cD]^{(N)}  + a^2  E^-_-  \ot [\cD]^{(N)} 
\nn \\
&&\hspace*{-10mm}+(-1)^N \left\{  -c^2(a^2+b^2)  E^{++}_{\;-}\ot  \cA^{(N-1)}
+acd^2  E^{++}_{\;+} \ot B^{(N-1)} \right. \nn\\&&\hspace*{-10mm}\left.
-a^2 bc  E^{-+}_{\;-} \ot B^{(N-1)}
-ac(b^2+c^2) E^{+-}_{\;-} \ot  B^{(N-1)}
+(a^2d^2-b^4) E^{++}_{\;-} \ot D^{(N-1)}\right\}.
\label{eq:rrD}
\ee

Let us first consider the simplest relations \eqref{eq:srrB} and \eqref{eq:rrB}.
The recursion relation \eqref{eq:srrB} is most easily obtained by graphical arguments: the operator $\cB^{(N+1)}$ is represented by 
\be
\begin{tikzpicture}[scale=0.4]
  \draw[aline=0.4](-9,1) -- (1,1);
  \draw[aline=0.62](1,-1) -- (-9,-1);
\foreach\x in {-8,-6,-4,-2,0}{
    \draw[aline=0.9](\x,2) -- (\x,-2);
  }
 \draw (1,1) arc (90:-90:1);
\draw(2,0)  node[bblob]{};
\draw(-12,0) node[left] {$\cB^{(N+1)}=$};
\draw(-8,2.5) node[above] {$N+1$};
\draw(-6,2.5) node[above] {$N$};
\draw(-4,2.5) node[above] {$\cdots$};
\draw(-2,2.5) node[above] {$2$};
\draw(-0,2.5) node[above] {$1$};
\draw(-9,1) node[left] {$-$};
\draw(-9,-1) node[left] {$+$};
\end{tikzpicture}
\label{fig:Bpic}\ee
The non-zero-weight configurations of spins around the leftmost column of R-matrices, with the two leftmost spins fixed as shown in \eqref{fig:Bpic}, are
\ben\begin{tikzpicture}[scale=0.4]
  \draw[aline=0.8](-1,1) -- (1,1);
  \draw[aline=0.3](1,-1) -- (-1,-1);
\draw[aline=0.6](0,2) -- (0,-2);
\draw(1,1) node[right] {$+$};
\draw(1,-1) node[right] {$+$};
\draw(-1,1) node[left] {$-$};
\draw(-1,-1) node[left] {$+$};
\draw(0,2) node[above] {$+$};
\draw(0,-2) node[below] {$-$};
\draw(0,0) node[right] {$-$};
\draw(0,-3) node[below] {$bc$};
 \end{tikzpicture}
\qquad 
\begin{tikzpicture}[scale=0.4]
  \draw[aline=0.8](-1,1) -- (1,1);
  \draw[aline=0.3](1,-1) -- (-1,-1);
\draw[aline=0.6](0,2) -- (0,-2);
\draw(1,1) node[right] {$-$};
\draw(1,-1) node[right] {$+$};
\draw(-1,1) node[left] {$-$};
\draw(-1,-1) node[left] {$+$};
\draw(0,2) node[above] {$+$};
\draw(0,-2) node[below] {$+$};
\draw(0,0) node[right] {$+$};
\draw(0,-3) node[below] {$ab$};
 \end{tikzpicture}
\qquad 
\begin{tikzpicture}[scale=0.4]
  \draw[aline=0.8](-1,1) -- (1,1);
  \draw[aline=0.3](1,-1) -- (-1,-1);
\draw[aline=0.6](0,2) -- (0,-2);
\draw(1,1) node[right] {$-$};
\draw(1,-1) node[right] {$+$};
\draw(-1,1) node[left] {$-$};
\draw(-1,-1) node[left] {$+$};
\draw(0,2) node[above] {$-$};
\draw(0,-2) node[below] {$-$};
\draw(0,0) node[right] {$-$};
\draw(0,-3) node[below] {$ab$};
 \end{tikzpicture}
\qquad 
\begin{tikzpicture}[scale=0.4]
  \draw[aline=0.8](-1,1) -- (1,1);
  \draw[aline=0.3](1,-1) -- (-1,-1);
\draw[aline=0.6](0,2) -- (0,-2);
\draw(1,1) node[right] {$-$};
\draw(1,-1) node[right] {$-$};
\draw(-1,1) node[left] {$-$};
\draw(-1,-1) node[left] {$+$};
\draw(0,2) node[above] {$+$};
\draw(0,-2) node[below] {$-$};
\draw(0,0) node[right] {$+$};
\draw(0,-3) node[below] {$bc$};
 \end{tikzpicture}
\een
where the associated weights are shown below the configuration diagram. Expression \eqref{eq:srrB} follows directly. 
Next, we note that  $Q^{(N+1)} = \id \ot Q^{(N)} + (-1)^{N+1} q_{N+1,N}$, from which it follows that 
\be
[\cB]^{(N+1)}&=& bc \,E^+_- \ot [\cA]^{(N)}
+ ab \,\id \ot [\cB]^{(N)}
+ bc \, E^+_- \ot [\cD]^{(N)}\nn\\
&&+ (-1)^{N+1}\left\{q_{N+1,N} \, \cB^{(N+1)}-d^2 \cB^{(N)} \, q_{N+1,N}\right\}.
 \label{eq:rrBinter}\ee
If we represent $q=E^{++}_{\;-}$ by the following picture
\ben \begin{tikzpicture}[scale=0.5]
\draw(0,-2) node[left] {$q=$};
\draw[aline=0.6](1,-2) -- (1,-3);
\draw(2,-1) arc (-0:-180:1);
\end{tikzpicture}
\een
then we have 
\ben &&q_{N+1,N} \, \cB^{(N+1)}-d^2 \cB^{(N)} \, q_{N+1,N}=\\[2mm]&&
\begin{tikzpicture}[scale=0.4]
  \draw[aline=0.4](-9,1) -- (1,1);
  \draw[aline=0.62](1,-1) -- (-9,-1);
\foreach\x in {-4,-2,0}{
    \draw[aline=0.9](\x,3) -- (\x,-3);
  }
\draw[aline=0.9](-6,3) -- (-6,-1);
\draw[aline=0.9](-8,3) -- (-8,-1);
\draw[aline=0.6](-7,-2) -- (-7,-3);
 \draw (1,1) arc (90:-90:1);
\draw(2,0)  node[bblob]{};
\draw(-8,3.5) node[above] {$N+1$};
\draw(-6,3.5) node[above] {$N$};
\draw(-4,3.5) node[above] {$\cdots$};
\draw(-2,3.5) node[above] {$2$};
\draw(-0,3.5) node[above] {$1$};
\draw(-9,1) node[left] {$-$};
\draw(-9,-1) node[left] {$+$};
\draw(-8,-1) arc (-180:0:1);
\end{tikzpicture}\quad\quad
\begin{tikzpicture}[scale=0.4]
\draw(-11,0) node[left] {$-d^2$};
  \draw[aline=0.4](-9,1) -- (1,1);
  \draw[aline=0.62](1,-1) -- (-9,-1);
\foreach\x in {-4,-2,0}{
    \draw[aline=0.9](\x,3) -- (\x,-3);
  }
\draw(-7,2) -- (-7,-3);
\draw[aline=0.6](-7,-2) -- (-7,-3);
 \draw (1,1) arc (90:-90:1);
\draw(2,0)  node[bblob]{};
\draw(-8,3.5) node[above] {$N+1$};
\draw(-6,3.5) node[above] {$N$};
\draw(-4,3.5) node[above] {$\cdots$};
\draw(-2,3.5) node[above] {$2$};
\draw(-0,3.5) node[above] {$1$};
\draw(-9,1) node[left] {$-$};
\draw(-9,-1) node[left] {$+$};
\draw(-8,3) arc (-180:0:1);
\end{tikzpicture}
\een
Non-zero-weight spin configurations associated with the $N+1$ and $N$ sites are
\ben
\begin{tikzpicture}[scale=0.4]
  \draw[aline=0.65](-1,1) -- (3,1);
  \draw[aline=0.45](3,-1) -- (-1,-1);
\draw[aline=0.75](0,3) -- (0,-1);
\draw[aline=0.75](2,3) -- (2,-1);
\draw[aline=0.6](1,-2) -- (1,-3);
\draw(-0,3.5) node[above] {$+$};
\draw(2,3.5) node[above] {$+$};
\draw(3,1) node[right] {$-$};
\draw(3,-1) node[right] {$+$};
\draw(-1,1) node[left] {$-$};
\draw(-1,-1) node[left] {$+$};
\draw(1,-3) node[below] {$-$};
\draw(2,-2) node {$+$};
\draw(0,-2) node {$+$};
\draw(1, 1) node[above] {$-$};
\draw(1,-1) node[above] {$+$};
\draw(2, 0) node[right] {$+$};
\draw(0,0) node[left] {$+$};
\draw(1,-4) node[below] {$a^2b^2$};
\draw(2,-1) arc (-0:-180:1);
\end{tikzpicture}
\quad\quad
\begin{tikzpicture}[scale=0.4]
  \draw[aline=0.75](-1,1) -- (3,1);
  \draw[aline=0.4](3,-1) -- (-1,-1);
\draw[aline=0.4](1,2) -- (1,-3);
\draw(-3,0) node[left] {$-d^2$};
\draw(3,1) node[right] {$-$};
\draw(3,-1) node[right] {$+$};
\draw(2,3) arc (-0:-180:1);
\draw(1,-4) node[below] {$ab$};
\draw(-0,3.5) node[above] {$+$};
\draw(2,3.5) node[above] {$+$};
\draw(-1,1) node[left] {$-$};
\draw(-1,-1) node[left] {$+$};
\draw(1,-3) node[below] {$-$};
\draw(1, 0) node[right] {$-$};
\draw(1, 1.5) node[right] {$-$};
\end{tikzpicture},
\een
where again the weights are indicated beneath the pictures. 
Hence,  we have 
\ben q_{N+1,N} \, \cB^{(N+1)}-d^2 \cB^{(N)} \, q_{N+1,N}= ab(ab-d^2)E^{++}_{\;-} \ot B^{(N-1)},\een
and substitution into \eqref{eq:rrBinter} gives the desired relation \eqref{eq:rrB}. 

 Let now consider the proof of the more slightly complicated case of relations \eqref{eq:srrA} and  \eqref{eq:rrA} - the proofs for 
the corresponding $\cC$ and $\cD$ relations \eqref{eq:srrC}, \eqref{eq:rrC} and \eqref{eq:srrD},\eqref{eq:rrD} will then involve completely analogous arguments. 
We identify 
\ben
\begin{tikzpicture}[scale=0.4]
  \draw[aline=0.4](-9,1) -- (1,1);
  \draw[aline=0.62](1,-1) -- (-9,-1);
\foreach\x in {-8,-6,-4,-2,0}{
    \draw[aline=0.9](\x,2) -- (\x,-2);
  }
 \draw (1,1) arc (90:-90:1);
\draw(2,0)  node[bblob]{};
\draw(-12,0) node[left] {$\cA^{(N+1)}=$};
\draw(-8,2.5) node[above] {$N+1$};
\draw(-6,2.5) node[above] {$N$};
\draw(-4,2.5) node[above] {$\cdots$};
\draw(-2,2.5) node[above] {$2$};
\draw(-0,2.5) node[above] {$1$};
\draw(-9,1) node[left] {$+$};
\draw(-9,-1) node[left] {$+$};
\end{tikzpicture}
\een
and non-zero-weight configurations of spins around the leftmost column of R-matrices, with the two fixed spins shown, are
\ben\begin{tikzpicture}[scale=0.4]
  \draw[aline=0.8](-1,1) -- (1,1);
  \draw[aline=0.3](1,-1) -- (-1,-1);
\draw[aline=0.6](0,2) -- (0,-2);
\draw(1,1) node[right] {$+$};
\draw(1,-1) node[right] {$+$};
\draw(-1,1) node[left] {$+$};
\draw(-1,-1) node[left] {$+$};
\draw(0,2) node[above] {$+$};
\draw(0,-2) node[below] {$+$};
\draw(0,0) node[right] {$+$};
\draw(0,-3) node[below] {$a^2$};
 \end{tikzpicture}
\qquad 
\begin{tikzpicture}[scale=0.4]
  \draw[aline=0.8](-1,1) -- (1,1);
  \draw[aline=0.3](1,-1) -- (-1,-1);
\draw[aline=0.6](0,2) -- (0,-2);
\draw(1,1) node[right] {$+$};
\draw(1,-1) node[right] {$+$};
\draw(-1,1) node[left] {$+$};
\draw(-1,-1) node[left] {$+$};
\draw(0,2) node[above] {$-$};
\draw(0,-2) node[below] {$-$};
\draw(0,0) node[right] {$-$};
\draw(0,-3) node[below] {$b^2$};
 \end{tikzpicture}
\qquad 
\begin{tikzpicture}[scale=0.4]
  \draw[aline=0.8](-1,1) -- (1,1);
  \draw[aline=0.3](1,-1) -- (-1,-1);
\draw[aline=0.6](0,2) -- (0,-2);
\draw(1,1) node[right] {$-$};
\draw(1,-1) node[right] {$+$};
\draw(-1,1) node[left] {$+$};
\draw(-1,-1) node[left] {$+$};
\draw(0,2) node[above] {$-$};
\draw(0,-2) node[below] {$+$};
\draw(0,0) node[right] {$+$};
\draw(0,-3) node[below] {$ac$};
 \end{tikzpicture}
\qquad 
\begin{tikzpicture}[scale=0.4]
  \draw[aline=0.8](-1,1) -- (1,1);
  \draw[aline=0.3](1,-1) -- (-1,-1);
\draw[aline=0.6](0,2) -- (0,-2);
\draw(1,1) node[right] {$+$};
\draw(1,-1) node[right] {$-$};
\draw(-1,1) node[left] {$+$};
\draw(-1,-1) node[left] {$+$};
\draw(0,2) node[above] {$+$};
\draw(0,-2) node[below] {$-$};
\draw(0,0) node[right] {$+$};
\draw(0,-3) node[below] {$ac$};
 \end{tikzpicture}
\qquad 
\begin{tikzpicture}[scale=0.4]
  \draw[aline=0.8](-1,1) -- (1,1);
  \draw[aline=0.3](1,-1) -- (-1,-1);
\draw[aline=0.6](0,2) -- (0,-2);
\draw(1,1) node[right] {$-$};
\draw(1,-1) node[right] {$-$};
\draw(-1,1) node[left] {$+$};
\draw(-1,-1) node[left] {$+$};
\draw(0,2) node[above] {$-$};
\draw(0,-2) node[below] {$-$};
\draw(0,0) node[right] {$+$};
\draw(0,-3) node[below] {$c^2$};
 \end{tikzpicture}
\een
Hence we obtain Equation \eqref{eq:srrA} and \be
[\cA]^{(N+1)}&=&  
a^2  E^+_+  \ot [\cA]^{(N)} +  b^2  E^-_-  \ot [\cA]^{(N)}
+  ac  E^-_+ \ot [\cB]^{(N)} 
+ ac  E^+_- \ot [\cC]^{(N)}
+ c^2 E^-_- \ot  [\cD]^{(N)} \nn \\
&&+ (-1)^{N+1}\left\{q_{N+1,N} \, \cA^{(N+1)}-d^2 \cA^{(N)} \, q_{N+1,N}\right\}. 
 \label{eq:rrAinter}\ee
Continuing  as for the $\cB^{(N)}$ case, we have 
\ben &&q_{N+1,N} \, \cA^{(N+1)}-d^2 \cA^{(N)} \, q_{N+1,N}=\\[2mm]&&
\begin{tikzpicture}[scale=0.4]
  \draw[aline=0.4](-9,1) -- (1,1);
  \draw[aline=0.62](1,-1) -- (-9,-1);
\foreach\x in {-4,-2,0}{
    \draw[aline=0.9](\x,3) -- (\x,-3);
  }
\draw[aline=0.9](-6,3) -- (-6,-1);
\draw[aline=0.9](-8,3) -- (-8,-1);
\draw[aline=0.6](-7,-2) -- (-7,-3);
 \draw (1,1) arc (90:-90:1);
\draw(2,0)  node[bblob]{};
\draw(-8,3.5) node[above] {$N+1$};
\draw(-6,3.5) node[above] {$N$};
\draw(-4,3.5) node[above] {$\cdots$};
\draw(-2,3.5) node[above] {$2$};
\draw(-0,3.5) node[above] {$1$};
\draw(-9,1) node[left] {$+$};
\draw(-9,-1) node[left] {$+$};
\draw(-8,-1) arc (-180:0:1);
\end{tikzpicture}\quad\quad
\begin{tikzpicture}[scale=0.4]
\draw(-11,0) node[left] {$-d^2$};
  \draw[aline=0.4](-9,1) -- (1,1);
  \draw[aline=0.62](1,-1) -- (-9,-1);
\foreach\x in {-4,-2,0}{
    \draw[aline=0.9](\x,3) -- (\x,-3);
  }
\draw(-7,2) -- (-7,-3);
\draw[aline=0.6](-7,-2) -- (-7,-3);
 \draw (1,1) arc (90:-90:1);
\draw(2,0)  node[bblob]{};
\draw(-8,3.5) node[above] {$N+1$};
\draw(-6,3.5) node[above] {$N$};
\draw(-4,3.5) node[above] {$\cdots$};
\draw(-2,3.5) node[above] {$2$};
\draw(-0,3.5) node[above] {$1$};
\draw(-9,1) node[left] {$+$};
\draw(-9,-1) node[left] {$+$};
\draw(-8,3) arc (-180:0:1);
\end{tikzpicture}
\een
There are now considerably more non-zero-weight spin configurations associated with the $N+1$ and $N$ sites. They are
\ben
&&\begin{tikzpicture}[scale=0.4]
  \draw[aline=0.65](-1,1) -- (3,1);
  \draw[aline=0.45](3,-1) -- (-1,-1);
\draw[aline=0.75](0,3) -- (0,-1);
\draw[aline=0.75](2,3) -- (2,-1);
\draw[aline=0.6](1,-2) -- (1,-3);
\draw(-0,3.5) node[above] {$+$};
\draw(2,3.5) node[above] {$+$};
\draw(3,1) node[right] {$+$};
\draw(3,-1) node[right] {$+$};
\draw(-1,1) node[left] {$+$};
\draw(-1,-1) node[left] {$+$};
\draw(1,-3) node[below] {$-$};
\draw(2,-2) node {$+$};
\draw(0,-2) node {$+$};
\draw(1, 1) node[above] {$+$};
\draw(1,-1) node[above] {$+$};
\draw(2, 0) node[right] {$+$};
\draw(0,0) node[left] {$+$};
\draw(1,-4) node[below] {$a^4$};
\draw(2,-1) arc (-0:-180:1);
\end{tikzpicture}
\qquad\qquad
\begin{tikzpicture}[scale=0.4]
  \draw[aline=0.65](-1,1) -- (3,1);
  \draw[aline=0.45](3,-1) -- (-1,-1);
\draw[aline=0.75](0,3) -- (0,-1);
\draw[aline=0.75](2,3) -- (2,-1);
\draw[aline=0.6](1,-2) -- (1,-3);
\draw(-0,3.5) node[above] {$+$};
\draw(2,3.5) node[above] {$-$};
\draw(3,1) node[right] {$-$};
\draw(3,-1) node[right] {$+$};
\draw(-1,1) node[left] {$+$};
\draw(-1,-1) node[left] {$+$};
\draw(1,-3) node[below] {$-$};
\draw(2,-2) node {$+$};
\draw(0,-2) node {$+$};
\draw(1, 1) node[above] {$+$};
\draw(1,-1) node[above] {$+$};
\draw(2, 0) node[right] {$+$};
\draw(0,0) node[left] {$+$};
\draw(1,-4) node[below] {$a^3c$};
\draw(2,-1) arc (-0:-180:1);
\end{tikzpicture}
\qquad\qquad
\begin{tikzpicture}[scale=0.4]
  \draw[aline=0.65](-1,1) -- (3,1);
  \draw[aline=0.45](3,-1) -- (-1,-1);
\draw[aline=0.75](0,3) -- (0,-1);
\draw[aline=0.75](2,3) -- (2,-1);
\draw[aline=0.6](1,-2) -- (1,-3);
\draw(-0,3.5) node[above] {$-$};
\draw(2,3.5) node[above] {$+$};
\draw(3,1) node[right] {$-$};
\draw(3,-1) node[right] {$+$};
\draw(-1,1) node[left] {$+$};
\draw(-1,-1) node[left] {$+$};
\draw(1,-3) node[below] {$-$};
\draw(2,-2) node {$+$};
\draw(0,-2) node {$+$};
\draw(1, 1) node[above] {$-$};
\draw(1,-1) node[above] {$+$};
\draw(2, 0) node[right] {$+$};
\draw(0,0) node[left] {$+$};
\draw(1,-4) node[below] {$a^2bc$};
\draw(2,-1) arc (-0:-180:1);
\end{tikzpicture}\\[2mm]
&&\hspace*{-10mm}
\begin{tikzpicture}[scale=0.4]
  \draw[aline=0.75](-1,1) -- (3,1);
  \draw[aline=0.4](3,-1) -- (-1,-1);
\draw[aline=0.4](1,2) -- (1,-3);
\draw(-3,0) node[left] {$-d^2$};
\draw(3,1) node[right] {$+$};
\draw(3,-1) node[right] {$+$};
\draw(2,3) arc (-0:-180:1);
\draw(1,-4) node[below] {$b^2$};
\draw(-0,3.5) node[above] {$+$};
\draw(2,3.5) node[above] {$+$};
\draw(-1,1) node[left] {$+$};
\draw(-1,-1) node[left] {$+$};
\draw(1,-3) node[below] {$-$};
\draw(1, 0) node[right] {$-$};
\draw(1, 1.5) node[right] {$-$};
\end{tikzpicture}\quad
\begin{tikzpicture}[scale=0.4]
  \draw[aline=0.75](-1,1) -- (3,1);
  \draw[aline=0.4](3,-1) -- (-1,-1);
\draw[aline=0.4](1,2) -- (1,-3);
\draw(-3,0) node[left] {$-d^2$};
\draw(3,1) node[right] {$-$};
\draw(3,-1) node[right] {$+$};
\draw(2,3) arc (-0:-180:1);
\draw(1,-4) node[below] {$ac$};
\draw(-0,3.5) node[above] {$+$};
\draw(2,3.5) node[above] {$+$};
\draw(-1,1) node[left] {$+$};
\draw(-1,-1) node[left] {$+$};
\draw(1,-3) node[below] {$+$};
\draw(1, 0) node[right] {$+$};
\draw(1, 1.5) node[right] {$-$};
\end{tikzpicture}\quad
\begin{tikzpicture}[scale=0.4]
  \draw[aline=0.75](-1,1) -- (3,1);
  \draw[aline=0.4](3,-1) -- (-1,-1);
\draw[aline=0.4](1,2) -- (1,-3);
\draw(-3,0) node[left] {$-d^2$};
\draw(3,1) node[right] {$-$};
\draw(3,-1) node[right] {$-$};
\draw(2,3) arc (-0:-180:1);
\draw(1,-4) node[below] {$c^2$};
\draw(-0,3.5) node[above] {$+$};
\draw(2,3.5) node[above] {$+$};
\draw(-1,1) node[left] {$+$};
\draw(-1,-1) node[left] {$+$};
\draw(1,-3) node[below] {$-$};
\draw(1, 0) node[right] {$+$};
\draw(1, 1.5) node[right] {$-$};
\end{tikzpicture}
\een
Hence we have 
\ben q_{N+1,N} \, \cA^{(N+1)}-d^2 \cA^{(N)} \, q_{N+1,N}&=& a^4\, E^{++}_{\;-} \ot A^{(N-1)} + a^3c\, 
 E^{+-}_{\;-} \ot B^{(N-1)}  + a^2bc\,  E^{-+}_{\;-} \ot B^{(N-1)}\\
&&-b^2 d^2 \, E^{++}_{\;-} \ot A^{(N-1)} - ac d^2 \, E^{++}_{\;+} \ot B^{(N-1)} - c^2 d^2 \, E^{++}_{\;-} \ot D^{(N-1)},\een
and substitution into \eqref{eq:rrAinter} gives the desired relation \eqref{eq:rrA}. 

Having established relations \eqref{eq:srrA}--\eqref{eq:rrD}, we will now use them to prove the statement of the theorem.
Let us make the inductive hypothesis that \eqref{eq:ft1}--\eqref{eq:ft4} hold for a specific integer $N\geq 2$, i.e., we have
\be
\, [ \cA]^{(N)}&=& (-1)^N c\, d\,  E^+\ot  \cB^{(N-1)}\label{eq:mft1},\\
\, [ \cB]^{(N)}&=& 0\label{eq:mft2},\\
\, [ \cC]^{(N)}&=&(-1)^N \left( a\, c\,  E^+\ot \cA^{(N-1)} +c^2\,  E^- \ot\cB^{(N-1)} +  c\, d\,  E^+ \ot\cD^{(N-1)} \right)\label{eq:mft3},\\
\, [ \cD]^{(N)}&=& (-1)^N  b\, c\,  E^+\ot \cB^{(N-1)}\label{eq:mft4}.
\ee 
We consider first the inductive step for $[\cB]^{(N)}$: from \eqref{eq:rrB}, and the above inductive hypothesis for $[ \cA]^{(N)}$,
$[ \cB]^{(N)}$ and $[ \cD]^{(N)}$ we have 
 \ben [\cB]^{(N+1)}&=&bc  E^+_- \ot [\cA]^{(N)}
+ bc E^+_- \ot [\cD]^{(N)}
+ (-1)^N ab(d^2-ab) \, E^{++}_{\;-} \ot \cB^{(N-1)},\\
&=& (-1)^N \left\{ bc^2d+ b^2c^2 + ab(d^2-ab)\right\} E^{++}_{\;-} \ot \cB^{(N-1)} =0, \een
with the expression vanishing due to the two identities \eqref{eq:ids}.

Let us again go through the inductive step for the more complicated $\cA^{(N)}$ case and leave the completely analogous proofs for $\cC^{(N)}$ and $\cD^{(N)}$ cases to the reader.
Using \eqref{eq:rrA} and the inductive hypothesis \eqref{eq:mft1}--\eqref{eq:mft4} gives
\be [\cA]^{(N+1)} &=& (-1)^N \left\{ a^2 c d \, E^{++}_{\;+} \ot \cB^{(N-1)} +b^2 cd \, E^{-+}_{\;-} \ot \cB^{(N-1)} \right.\nn 
\\ &&\hspace*{-8mm}+ ac \left( a\, c\, E^{++}_{\;-} \ot \cA^{(N-1)} +c^2\, E^{+-}_{\;-} \ot \cB^{(N-1)} +  c\, d\,  
 E^{++}_{\;-} \ot \cD^{(N-1)}\right)
+bc^3  E^{-+}_{\;-} \ot \cB^{(N-1)}
\nn
\\&&\hspace*{-8mm}+(b^2\,d^2\,-a^4) \, E^{++}_{\;-} \ot \cA^{(N-1)} + acd^2 \,E^{++}_{\;+} \ot \cB^{(N-1)} \nn \\ &&\hspace*{-8mm} \left.
 - a^2 b c \, E^{-+}_{\;-} \ot \cB^{(N-1)}
- a^3 c  E^{+-}_{\;-} \ot \cB^{(N-1)} + c^2 d^2 E^{++}_{\;-} \ot \cD^{(N-1)}\right\}.
 \ee
Using the identities \eqref{eq:ids}, this expression simplifies to 
 \be \hspace*{-3mm}[\cA]^{(N+1)} =\hspace*{-1mm} (-1)^{N+1} cd \left( bc\, E^{++}_{\;-} \ot \cA^{(N-1)}  +
 ab \,E^+\ot \id\ot \cB^{(N-1)}  + bc\,  E^{++}_{\;-} \ot D^{(N-1)}\right).\ee
Finally, using equation \eqref{eq:srrB} gives
\be [\cA]^{(N+1)} &=& (-1)^{N+1} cd\,E^+ \ot \cB^{(N)},\ee
which completes the inductive step for \eqref{eq:mft1}.
\end{proof}

It is a simple corollary of Theorem 1 that  the transfer matrix 
\ben t^{(N)}(u)=  \frac{\sinh(u)}{\sinh(\eta)} \cA^{(N)}(u) - \frac{\sinh(u+2\eta)}{\sinh(\eta)}\cD^{(N)}(u),\een
commutes with the SUSY operator $Q^{(N)}$ thus
\ben Q^{(N)} t^{(N)}(u)= d(u)^2 \,t^{(N-1)}(u) \,Q^{(N)}.\een
Differentiating, and recalling that $\eta=2\pi i/3$, we find
\ben  Q^{(N)} t^{(N)'}(0)- t^{(N-1)'}(0)Q^{(N)} = \frac{-2i}{\sqrt{3}} t^{(N-1)}(0) Q^{(N)} =\frac{2i}{\sqrt{3}}  Q^{(N)} .\een
Using Equation \eqref{eq:tHcon}, this implies that 
\be H^{(N-1)} Q^{(N)}=Q^{(N)} H^{(N)},\ee
which is of course the SUSY relation derived directly in Section 2.

Using Theorem \ref{thm1} leads to the result
\ben
Q^{(N)} \cB^{(N)}(\lambda_1)\cB^{(N)}(\lambda_2) \cdots \cB^{(N)}(\lambda_m) \, \Omega^{(N)}
= \prod_{i=1}^m d(\gl_i)^2 \, \cB^{(N-1)}(\lambda_1)\cB^{(N-1)}(\lambda_2) \cdots \cB^{(N-1)}(\lambda_m) Q^{(N)} \, \Omega^{(N)}.\een
Next, we note that 
\be Q^{(N)} \, \Omega^{(N)} = \sli_{i=1}^{N-1} (-1)^{i+1}\sigma_i^{-}\Omega^{(N-1)} =(-1)^N  \cB^{(N-1)}(\eta) \Omega^{(N-1)}.\label{eq:QBreln}\ee
The left-hand equality follows immediately from the definition \eqref{eq:Qdef} of $Q^{(N)}$. The right-hand equality comes from noting that $a(\eta)=-1$, $b(\eta)=c(\eta)=1$ and 
\ben K^-(\eta)=\begin{pmatrix} 0&0\\0&1\end{pmatrix}.\een
 Hence the non-zero-weight configuration contributions to $\cB^{(N-1)}(\eta) \Omega^{(N-1)}$ are 
\ben
\begin{tikzpicture}[scale=0.4]
  \draw[aline=0.38](-9,1) -- (1,1);
  \draw[aline=0.66](1,-1) -- (-9,-1);
\foreach\x in {-8,-6,-4,-2,0}{
    \draw[aline=0.9](\x,2) -- (\x,-2);
  }
 \draw (1,1) arc (90:-90:1);
\draw(2,0)  node[bblob]{};
\draw(-12,0) node[left] {$\sli_{i=1}^{N-1}$};
\draw(-8,3) node[above] {$N-1$};
\draw(-6,3) node[above] {$\cdots$};
\draw(-4,3) node[above] {$i$};
\draw(-2,3) node[above] {$\cdots$};
\draw(-0,3) node[above] {$1$};
\draw(-9,1) node[left] {$-$};
\draw(-9,-1) node[left] {$+$};
\draw(-6.3,0.7) node[left] {$-$};
\draw(-6.3,-1.3) node[left] {$+$};
\draw(-4.3,0.7) node[left] {$-$};
\draw(-4.3,-1.3) node[left] {$+$};
\draw(-2.3,0.7) node[left] {$-$};
\draw(-2.3,-1.3) node[left] {$-$};
\draw(-0.3,0.7) node[left] {$-$};
\draw(-0.3,-1.3) node[left] {$-$};
\draw(1.7,0.7) node[left] {$-$};
\draw(1.7,-1.3) node[left] {$-$};
\draw(-7.8,0) node[left] {$+$};
\draw(-5.8,0) node[left] {$+$};
\draw(-3.8,0) node[left] {$+$};
\draw(-1.8,0) node[left] {$+$};
\draw(0.2,0) node[left] {$+$};
\draw(-8,1.9) node[above] {$+$};
\draw(-8,-1.9) node[below] {$+$};
\draw(-6,1.9) node[above] {$+$};
\draw(-6,-1.9) node[below] {$+$};
\draw(-4,1.9) node[above] {$+$};
\draw(-4,-1.9) node[below] {$-$};
\draw(-2,1.9) node[above] {$+$};
\draw(-2,-1.9) node[below] {$+$};
\draw(-0,1.9) node[above] {$+$};
\draw(-0,-1.9) node[below] {$+$};
\end{tikzpicture}
\een
and the weight of the $i$th contribution to this sum is $a^{N-1-i} b^{N-2+i} c= (-1)^{N-1-i}$.  This yields the right-hand equality in \eqref{eq:QBreln}.
Thus it follows that we have the key relation
\be 
&&Q^{(N)} \cB^{(N)}(\lambda_1)\cB^{(N)}(\lambda_2) \cdots \cB^{(N)}(\lambda_m) \, \Omega^{(N)}
\nn\\&&= (-1)^N \prod_{i=1}^m d(\gl_i)^2 \, \cB^{(N-1)}(\lambda_1)\cB^{(N-1)}(\lambda_2) \cdots \cB^{(N-1)}(\lambda_m) \cB^{(N-1)}(\eta)  \, \Omega^{(N-1)}.\label{eq:keyreln}\ee
That is, the SUSY operator $Q^{(N)}$ takes an off-shell Bethe state $\cB(\gl_1)\cdots \cB(\gl_m) \Omega^{(N)}$, with $\gl_i\neq \eta$ for $\in\{1,2,\cdots,m\}$,  of the system with length $N$ to an off-shell Bethe state of the system with size $N-1$.
If any of the $\gl_i=\eta$ then we have  $d(\gl_i)=0$ and the off-shell Bethe state $\cB(\gl_1)\cdots \cB(\gl_m) \Omega^{(N)}$ is in the kernel of $Q^{(N)}$. The nilpotency property  $Q^{(N-1)} Q^{(N)}$  on any off-shell Bethe state follows immediately from this.

Let us now consider the action of $Q^{(N)}$ on on-shell Bethe states. We have the following:
\begin{proposition}
If $\{\gl_1,\gl_2,\cdots,\gl_m\}$, with $\gl_i\neq \eta$ for $i\in\{1,2,\cdots,m\}$, satisfy the Bethe equations \eqref{eq:Beqns} for a size $N$ system, , then $\{\gl_1,\gl_2,\cdots,\gl_m,\gl_{m+1}:=\eta\}$ satisfy the Bethe equations for a size $N-1$ system.
\end{proposition}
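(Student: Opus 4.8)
The plan is to verify the size-$(N-1)$ Bethe equations \eqref{eq:Beqns} directly for the augmented set $\{\gl_1,\dots,\gl_m,\gl_{m+1}:=\eta\}$, treating separately the $m$ equations indexed by the old roots $\gl_j$ ($j\le m$) and the single new equation indexed by $\gl_{m+1}=\eta$. The two essential inputs are the explicit pseudovacuum data of \eqref{eq:dpm}, namely $\Delta_+^{(N)}(u)=a(u)^{2N}d(u)$ together with the structural scaling $\Delta_+^{(N)}(u)/\Delta_+^{(N-1)}(u)=a(u)^2$ and $\Delta_-^{(N)}(u)/\Delta_-^{(N-1)}(u)=b(u)^2$, and the specialisation $\eta=2\pi i/3$, i.e. $3\eta\equiv 0 \pmod{2\pi i}$, which is exactly what singles out the root $\eta$.

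For the old roots I would first isolate the single extra factor that $\gl_{m+1}=\eta$ contributes to the right-hand product of the $j$-th equation,
\[
\frac{\sinh(\gl_j-\eta+\eta)\,\sinh(\gl_j+\eta+2\eta)}{\sinh(\gl_j-\eta-\eta)\,\sinh(\gl_j+\eta)}.
\]
Using $\sinh(\gl_j+3\eta)=\sinh\gl_j$ and $\sinh(\gl_j-2\eta)=\sinh(\gl_j+\eta)$ (both consequences of $3\eta\equiv0$), this collapses to $\sinh^2\gl_j/\sinh^2(\gl_j+\eta)=b(\gl_j)^2/a(\gl_j)^2$. On the left-hand side, passing from the size-$N$ to the size-$(N-1)$ equation multiplies $-\sinh(2\gl_j)\,\Delta_+/\Delta_-$ by $\bigl(\Delta_+^{(N-1)}/\Delta_+^{(N)}\bigr)\bigl(\Delta_-^{(N)}/\Delta_-^{(N-1)}\bigr)=a(\gl_j)^{-2}b(\gl_j)^2$, the very same factor. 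Hence the size-$(N-1)$ equation at $\gl_j$ is equivalent to the size-$N$ equation at $\gl_j$, which holds by hypothesis.

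The new equation at $\gl_{m+1}=\eta$ is the main obstacle, since it is a genuine $0/0$ indeterminacy. First, its right-hand product telescopes: each factor $\frac{\sinh(2\eta-\gl_k)\sinh(\gl_k+3\eta)}{\sinh(-\gl_k)\sinh(\gl_k+\eta)}$ equals $1$ after using $\sinh(\gl_k+3\eta)=\sinh\gl_k$ and $\sinh(2\eta-\gl_k)=-\sinh(\gl_k+\eta)$, so the whole product is $1$. The left-hand side $-\sinh(2\eta)\,\Delta_+^{(N-1)}(\eta)/\Delta_-^{(N-1)}(\eta)$ is $0/0$: since $d(\eta)=0$ we have $\Delta_+^{(N-1)}(\eta)=0$, and since $\sinh(2\eta+\eta)=\sinh(3\eta)=0$ the defining combination \eqref{eq:Deltadef} forces $\Delta_-^{(N-1)}(\eta)=0$ as well. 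I would resolve this by the limit $u\to\eta$. The key local fact is that at $u=\eta$ the boundary matrix degenerates to $K^-(\eta)=\mathrm{diag}(0,1)$, which kills all but the $(-,-)$ channel and gives $\cD^{(N-1)}(\eta)\,\Omega^{(N-1)}=\Omega^{(N-1)}$ exactly (using $b(\eta)=1$ and the exact single-row pseudovacuum relations), so the $\Omega$-coefficient of $\cD^{(N-1)}(u)\,\Omega^{(N-1)}$ equals $1$ at $u=\eta$. Expanding \eqref{eq:Deltadef} to first order then yields $\Delta_-^{(N-1)}(u)\sim(u-\eta)$ and $\Delta_+^{(N-1)}(u)\sim(u-\eta)/\sinh\eta$, whence $\lim_{u\to\eta}\Delta_+^{(N-1)}/\Delta_-^{(N-1)}=1/\sinh\eta$ and the left-hand side tends to $-\sinh(2\eta)/\sinh\eta=-a(\eta)=1$, matching the right-hand side.

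As a conceptual cross-check, and an alternative route that sidesteps the explicit $\Delta_\pm$, one can instead use the transfer-matrix intertwining $Q^{(N)}t^{(N)}(u)=d(u)^2\,t^{(N-1)}(u)\,Q^{(N)}$ together with \eqref{eq:keyreln}: an on-shell size-$N$ state is a $t^{(N)}$-eigenstate, its image under $Q^{(N)}$ is nonzero (because $\gl_i\neq\eta$) and proportional to the augmented Bethe state, so that state is a $t^{(N-1)}$-eigenstate, and a generic off-shell Bethe state is a transfer-matrix eigenstate precisely when the Bethe equations hold. The obstacle in this second route is justifying that the ``unwanted'' terms are linearly independent, so that being an eigenstate actually forces the equations; this is why I would present the direct computation above as the primary argument.
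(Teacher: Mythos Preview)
Your proof is correct and follows essentially the same route as the paper: verify the size-$(N-1)$ Bethe equations directly, using the scaling relations $\Delta_+^{(N)}=a^2\Delta_+^{(N-1)}$ and $\Delta_-^{(N)}=b^2\Delta_-^{(N-1)}$ for the old roots, and then check the new equation at $\gl_{m+1}=\eta$ separately. The only noteworthy difference is in the treatment of the new root: the paper simply writes that both sides equal $1$, relying on the explicit closed forms \eqref{eq:dpm} together with the identity $\sinh(\gl+2\eta)=\sinh(\gl-\eta)$ (valid since $3\eta\equiv 0$), which removes the apparent $0/0$ before evaluation; you instead resolve the indeterminacy by an explicit first-order expansion using $\cD^{(N-1)}(\eta)\Omega^{(N-1)}=\Omega^{(N-1)}$, which is a correct and slightly more careful variant of the same computation. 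Your closing remark about the alternative transfer-matrix route is accurate but, as you note, not needed for the proof.
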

\begin{proof}
First of all, it is a simple consequence of \eqref{eq:srrA} and \eqref{eq:srrD} that we have\ben
\cA^{(N)}(u) \Omega^{(N)} &=& E^+_+ \ot  a(u)^2\cA^{(N-1)}(u) \Omega^{(N-1)}, \\
\cD^{(N)}(u) \Omega^{(N)} &=& E^+_+ \ot\left[c(u)^2 \cA^{(N-1)}(u) + b(u)^2 \cD^{(N-1)}(u)\right] \Omega^{(N-1)}.
\een
from which it follows that $\Delta_\pm^{(N)}(u)$ defined by the relations \eqref{eq:Deltadef} satisfy 
\be \Delta^{(N)}_+(u) = a(u)^2 \Delta^{(N-1)}_+(u)\,\quad 
\Delta^{(N)}_-(u) = b(u)^2 \Delta^{(N-1)}_-(u).\label{eq:dpmrr}\ee 
These recursion relations can be solved to give the explicit expressions
\be \Delta^{(N)}_+(u) =  \frac{a(u)^{2N}\sinh(u-\eta)}{\sinh(\eta)},\quad \Delta^{(N)}_-(\gl) = 
- \frac{b(u)^{2N}\sinh(2u)\sinh(u+2\eta)}{\sinh(\eta)}.\label{eq:dpm}\ee
From relations \eqref{eq:dpmrr}, we see that if $\{\gl_1,\gl_2,\cdots,\gl_m\}$ satisfy the lattice-length $N$ Bethe equations \eqref{eq:Beqns}, then for $j\in \{1,2,\cdots,m\}$ we also have
\ben -\sinh(2\gl_j) \frac{\Delta_+^{(N-1)}(\gl_j)}{\Delta_-^{(N-1)}(\gl_j)}&=& 
\left[\prod_{k=1,\cdots, m;\,k\neq j} \frac{\sinh(\gl_j-\gl_k+\eta)\sinh(\gl_j+\gl_k+2\eta) }
{\sinh(\gl_j-\gl_k-\eta)\sinh(\gl_j+\gl_k)}\right] \frac{\sinh^2(\gl_j)}{\sinh^2(\gl_j+\eta)},\\
&=& \prod_{k=1\cdots m+1;\,k\neq j} \frac{\sinh(\gl_j-\gl_k+\eta)\sinh(\gl_j+\gl_k+2\eta) }
{\sinh(\gl_j-\gl_k-\eta)\sinh(\gl_j+\gl_k)},
\een
where $\gl_{m+1}:=\eta$.
For the remaining root $\gl_{m+1}$, we have
 \ben 
-\sinh(2\gl_{m+1}) \frac{\Delta_+^{(N-1)}(\gl_{m+1})}{\Delta_-^{(N-1)}(\gl_{m+1})}= 1 = 
\prod_{k=1\cdots m} \frac{\sinh(\gl_{m+1}-\gl_k+\eta)\sinh(\gl_{m+1}+\gl_k+2\eta) }
{\sinh(\gl_{m+1}-\gl_k-\eta)\sinh(\gl_{m+1}+\gl_k)}.\een 
\end{proof}

Thus we see that $Q^{(N)}$ takes an on-shell Bethe state for a size $N$ lattice with roots $\{\gl_1,\gl_2,\cdots,\gl_m\}$, with $\lambda_i\neq \eta$, to an on-shell Bethe state  $\{\gl_1,\gl_2,\cdots,\gl_m,\eta\}$ for a size $N-1$ lattice. It is simple to check from the explicit expression \eqref{eq:teigenval} that the transfer-matrix eigenvalues for the two different-sized chains are related by 
\ben \tau^{(N)}(u;\gl_1,\gl_2,\cdots,\gl_m) = d^2(u) \,\tau^{(N-1)}(u;\gl_1,\gl_2,\cdots,\gl_m,\eta).\een
It then follows from \eqref{eq:tHcon} that the corresponding eigenvalues of $H^{(N)}$ and $H^{(N-1)}$ are equal.

\section{Discussion}
In this paper we have computed the action of the lattice SUSY operator $Q^{(N)}$ on off-shell and on-shell Bethe states. The key result is given by Equation \eqref{eq:keyreln} from which Proposition 2 follows. Proposition 2 is in turn analogous to the observations about pairings of Bethe roots made for closed spin chains in \cite{MR2903041}. One significant difference is that the pairings for closed chains were observed only for Bethe states in particular momentum sectors of the Hilbert space in  \cite{MR2903041} whereas in the open chain considered here the pairings occur for all Bethe states. 

Two possible uses of our results are as follows: firstly,  by rewriting the action of $Q^{(N)}$ in terms of $\cB(\eta)$ (at least on Bethe states) we have connected the SUSY with the underlying reflection algebra - which is in turn a coideal subalgebra of the quantum affine algebra $U_q(\widehat{\mathfrak{sl}}_2)$ with $q=\exp(\eta)$ \cite{MR1193836}. We view this as a first step in the development of a more systematic understanding of the appearance and role of lattice SUSY within the context of the QISM approach. 

Secondly, a major outstanding problem is to compute the exact form of the zero-energy, vacuum vector $\omega^{(N)}$ of the Hamiltonian \eqref{eq:XXZ1}. Such a vector $\omega^{(N)} \in V^{\ot N}$ is characterised by the requirement  \be Q^{(N)}\omega^{(N)}=0, \quad Q^{(N+1)\dag} \omega^{(N)}=0.\label{eq:cohom}\ee  
So $\omega^{(N)}$ is a SUSY singlet, or equivalently $\omega^{(N)}$ is a vector in $\Ker(Q^{(N)})$ whose inner product with any vector in $\Ima(Q^{(N+1)})$ is zero.
An important step towards finding $\vac$ was taken recently in \cite{MR3635719}, in which it is shown firstly that the singlet is unique and in the $N$ mod 2 spin sector, and secondly that it is of the form
\ben \vac=\chi^{(N)}+\psi^{(N)},\een
where $\chi^{(N)}\in\hbox{Ker}(Q^{(N)})$ was constructed explicitly, but $\psi^{(N}\in \Ima(Q^{(N+1)})\subset V^{\ot N}$ was an unknown vector in the image space of $Q^{(N+1)}$. Given this result, a sufficent requirement for finding $\psi^{(N}$ (and hence $\omega^{(N)}$) is to have an explicit basis $
\{b_i^{(N)}\}$ for $\Ima(Q^{(N+1)})$ - since in this case we have 
\ben (\omega^{(N)},b_i^{(N)}) =0 = (\chi^{(N)},b_i^{(N)})+ (\psi^{(N)},b_i^{(N)}), \een
which determines $\psi^{(N}$. A first step to finding such a basis is to find the dimensions \ben \kappa^{(N)}:=\hbox{dim} \left(\hbox{Ker}(Q^{(N)})\right)\quad 
\iota^{(N)}:=\hbox{dim} \left(\hbox{Im}(Q^{(N)})\right).\een
This is a simple exercise as we have the two relations $\kappa^{(N)}=\iota^{(N+1)}+1$ (from the result concerning the uniqueness of the singlet given in \cite{MR3635719}) and $\kappa^{(N)}+\iota^{(N)}=
\hbox{dim}(V^{\ot N})=2^N$ (from elementary linear algebra). Solving these relations yields
\ben \iota^{(N)}= \lceil \frac{2}{3} (2^{N-1}-1)\rceil.\een
 This means that the $2^N$ dimensional eigenspace of $H^{(N)}$ is made up as follows:
\begin{description}
\item  1 zero energy SUSY singlet - the  vacuum state, \item $ \lceil \frac{2}{3} (2^{N}-1)\rceil$  states paired with a SUSY partner which is an eigenstate of $H^{(N+1)}$,
\item 
$\lceil \frac{2}{3} (2^{N-1}-1)\rceil$ states paired with a SUSY partner which is an eigenstate of $H^{(N-1)}$.\end{description}
This result is consistent with numerical diagonalisation of $H^{(N)}$ for small lattices. While we have yet to find an explicit expression for the basis vectors $ \{b_i^{(N)}\}$ of $\Ima(Q^{(N+1)})$ for general $N$, the suggestive combinatorial nature of their number $\iota^{(N+1)}$, and also of the number of such states in the spin $N$ mod 2 subspace of the vacuum vector $\omega^{(N)}$,  makes up optimistic that this is indeed possible.

Our results also provide a different way of describing $\omega^{(N)}$: it should be an on-shell Bethe vector 
$\cB(\gl_1) \cB(\gl_2) \cdots \cB(\gl_m) \Omega^{(N)}\in \Ker(Q^{(N)})$ for which $\gl_i\neq \eta$ for all $i\in\{1,2,\cdots,m\}$. We hope that this characterisation might also be useful  in  finding an exact expression for $\vac$.

\subsection*{Acknowledgements}
We would like to thank Anastasia Doikou, Christian Hagendorf, Des Johnston and Paul Zinn-Justin for useful discussions.


\end{document}